\newtheorem{theorem}{Theorem}[section]
\newtheorem{theo}[theorem]{Theorem}
\newtheorem{Pro}[theorem]{Proposition}
\newtheorem{lem}[theorem]{Lemma}
\newtheorem{rem}[theorem]{Remark}
\renewcommand{\epsilon}{\varepsilon}
\newcommand{\Z}{\mathbb Z}
\newcommand{\R}{\mathbb R}
\newcommand{\C}{\mathbb C}
\newcommand{\N}{\mathbb N}
\newcommand{\T}{\mathbb T}
\newcommand{\E}{\mathbb E}
\newcommand{\supp}{\mathrm{supp}\;}
\newcommand{\Hom}{H_{\omega}}
\newcommand{\Ho}{H_{0}(\omega)}
\newcommand{\Hdisc}{H_{\omega}^0}
\newcommand{\SD}{\mathrm{S}_{\mathrm{D}}(\R)\,}
\def\dd{\mathrm d}
\def\ee{\mathrm e}
\def\ii{\mathrm i}
\def\DD{\mathrm D}
\begin{document}

\author{Hakim Boumaza}
\affiliation{LAGA, U.M.R. 7539 C.N.R.S, Institut Galil\'ee, Universit\'e de Paris-Nord, 99 Avenue J.-B. Cl\'ement, 93430 Villetaneuse, France}
\email{boumaza@math.univ-paris13.fr}
\author{Hatem Najar}
\affiliation{D\'epartement de Math\'ematiques, Facult\'e des Sciences de Monastir.\\ Avenue de l'environnement, 5019 Monastir -Tunisie}
\email{hatem.najar@ipeim.rnu.tn}

\thanks{\ Research supported by CMCU project 09/G-15-04.}
\keywords{Lifshitz tails, random operators, Anderson localization, integrated density of states.}

\baselineskip=20pt
\renewcommand{\theequation}{\arabic{section}.\arabic{equation}}

\title{Lifshitz tails for matrix-valued Anderson models}

\begin{abstract}
This paper is devoted to the study of Lifshitz tails for a continuous matrix-valued Anderson-type model $\Hom$ acting on $L^2(\R^d)\otimes \C^{D}$, for arbitrary $d\geq 1$ and $D\geq 1$. We prove that the integrated density of states of $\Hom$ has a Lifshitz behavior at the bottom of the spectrum. We obtain a Lifshitz exponent equal to $-d/2$ and this exponent is independent of $D$. It shows that the behaviour of the integrated density of states at the bottom of the spectrum of a quasi-d-dimensional Anderson model is the same as its behaviour for a d-dimensional Anderson model.
\end{abstract}

\maketitle

\section{Introduction}\label{sec_intro}

\subsection{A general model}\label{sec_gen_model} We study the Lifshitz tails behaviour of the integrated density of states (IDS for short) of random Schr\"odinger operators of the form:

\begin{equation}\label{eq_model_Ho}
H_{0}(\omega)=-\Delta_d \otimes I_{\DD} + \sum_{n\in \Z^d} V_{\omega}^{(n)}(x-n),
\end{equation}
\noindent acting on the Hilbert space $L^2(\R^d)\otimes \C^{\DD}$, where $d\geq 1$ and $D\geq 1$ are integers, $\Delta_d$ is the d-dimensional Laplacian and $I_{\DD}$ is the identity matrix of order $D$.

\noindent Let $(\Omega,\mathcal{A},\mathsf{P})$ be a complete probability space and let $\omega \in \Omega$. We assume that, for every $n\in \Z^d$, the functions $x\mapsto V_{\omega}^{(n)}(x)$ take values in the space $\SD$ of real symmetric matrices of order $D$ and that these functions are supported on $[-\frac{1}{2},\frac{1}{2}]^d$ and bounded uniformly on $x$, $n$ and $\omega$. The sequence $(V_{\omega}^{(n)})_{n\in \Z^d}$ is a sequence of independent and identically distributed (\emph{i.i.d.} for short) random variables on $(\Omega,\mathcal{A})$. We finally assume that the sequence $(V_{\omega}^{(n)})_{n\in \Z^d}$ is such that the family of random operators $\{H_{0}(\omega) \}_{\omega \in \Omega}$ is $\Z^d$-ergodic. An operator like (\ref{eq_model_Ho}) is also called a quasi-$d$-dimensional Anderson model.

\noindent The vector space $L^2(\R^d)\otimes \C^{\DD}$ is endowed with the usual scalar product:
$$<f,g>_{L^2(\R^d)\otimes \C^{\DD}}=\int_{\R^d} <f(x),g(x)>_{\C^{\DD}} \dd x = \sum_{i=1}^{\DD} \int_{\R^d} f_i(x)\overline{g_i(x)} \dd x ,$$
where $f=(f_1,\ldots,f_{\DD})$, $g=(g_1,\ldots,g_{\DD})$ and $f_i\in L^2(\R^d)\otimes \C$, $g_i\in L^2(\R^d)\otimes \C$ are the $i$-th components of $f$ and $g$.
\vskip 2mm

\noindent As the functions $x\mapsto V_{\omega}^{(n)}(x)$, for every $\omega \in \Omega$ and every $n\in \Z^d$, take values in $\SD$, the operator $\Ho$ is self-adjoint on the Sobolev space $H^2(\R^d)\otimes \C^{\DD}$, for every $\omega \in \Omega$ (see \cite{RS2}). Thus, its spectrum $\sigma(\Ho)$ is included in $\R$. Moreover, due to the hypothesis of $\Z^d$-ergodicity of the family $\{H_{0}(\omega) \}_{\omega \in \Omega}$, there exists a set $\Sigma_0 \subset \R$ with the property: for $\mathsf{P}$-almost-every $\omega \in \Omega$, $\sigma(\Ho)=\Sigma_0$ (see \cite{CFKS}).
\vskip 3mm

\subsection{Existence of the IDS}\label{sec_exist_ids} We want to study the asymptotic behaviour of the IDS associated to $\Ho$ near the bottom of the almost-sure spectrum of $\Ho$. The IDS of $\Ho$ is the repartition function of energy levels, per unit volume, of $\Ho$. To define it properly, we first need to restrict the operator $\Ho$ to boxes of finite volume. We set, for $L\geq 1$ an integer,
\begin{equation}\label{eq_def_box}
C_{L}=\left[-\frac{2L+1}{2},\frac{2L+1}{2}\right]^d .
\end{equation}
Then, we consider $H_{0,C_{L}}(\omega)$ the restriction of $\Ho$ to the Hilbert space $L^2(C_{L})\otimes \C^{\DD}$ with Dirichlet boundary conditions on the border $\partial C_{L}$ of $C_{L}$. To define the IDS, we now consider, for every $E\in \R$, the following thermodynamical limit:
\begin{equation}\label{eq_def_ids}
N_{0}(E)=\lim_{L\to +\infty} \frac{1}{(2L+1)^d} \# \left\{ \lambda \leq E\ |\ \lambda \in \sigma(H_{0,C_{L}}(\omega)) \right\}.
\end{equation}
We have already proved in \cite{B08} that, for the general model $\Ho$, for every $E\in \R$, the limit (\ref{eq_def_ids}) exists and is $\mathsf{P}$-almost-surely independent of $\omega\in \Omega$ (see \cite[Corollary 1]{B08}). The question of the existence of (\ref{eq_def_ids}) involves two problems to solve. First we had to prove that, for every $E\in \R$ and every $\omega \in \Omega$, the cardinal $\# \{ \lambda \leq E |\ \lambda \in \sigma(H_{0,C_{L}}(\omega)) \}$ is finite. Then we had to prove the existence of the limit when $L$ tends to infinity. Both solutions to these two problems rely strongly on the fact that the semigroup $(e^{-t H_{0,C_{L}}(\omega)})_{t>0}$ has an $L^2$-kernel, which is given through a matrix-valued Feynman-Kac formula (see \cite[Proposition 1]{B08}). Once we obtain that the cardinal $\# \{ \lambda \leq E |\ \lambda \in \sigma(H_{0,C_{L}}(\omega)) \}$ is finite, we prove the convergence, as $L$ tends to infinity, of the sequence of Laplace transforms of the counting measures of the spectral values of $H_{0,C_{L}}(\omega)$ smaller than $E$. We prove the convergence of this sequence by using Birkhoff's ergodic theorem which leads to the existence of a Borel measure $\mathfrak{n}_0$ on $\R$, independent of $\omega$, which is the desired limit. We finally set
\begin{equation}\label{eq_def_ds}
\forall E\in \R,\ N_0(E)=\mathfrak{n}_0((-\infty,E]),
\end{equation}
the distribution function of $\mathfrak{n}_0$. The measure $\mathfrak{n}_0$ is called the density of states of $\Ho$.
\vskip 3mm

\subsection{A particular model}\label{sec_part_model} After this review of existence result of the IDS for the general model $\Ho$, we may consider a particular example of such model for which we will be able to prove precise results on Lifshitz tails of the IDS at the bottom of the spectrum.

\noindent We consider
\begin{equation}\label{eq_model_H}
\Hom= -\Delta_d\otimes I_{\DD} + W(x) + \sum_{n\in \Z^d}  \left(
\begin{smallmatrix}
\omega_{1}^{(n)} V_1(x-n) & & 0\\
 & \ddots &  \\
0 & & \omega_{D}^{(n)} V_D(x-n)\\
\end{smallmatrix}\right),
\end{equation}
acting on $L^2(\R^d)\otimes \C^{\DD}$, where $d\geq 1$ and $D\geq 1$ are integers, $\Delta_d$ and $I_{\DD}$ are as in (\ref{eq_model_Ho}). We set:
\begin{equation}\label{eq_model_H_not}
H=-\Delta_d\otimes I_{\DD} + W\quad \mathrm{and}\quad V_{\omega}(x)=\sum_{n\in \Z^d}  \left(
\begin{smallmatrix}
\omega_{1}^{(n)} V_1(x-n) & & 0\\
 & \ddots &  \\
0 & & \omega_{D}^{(n)} V_D(x-n)\\
\end{smallmatrix}\right).
\end{equation}
For the model (\ref{eq_model_H}), we make the assumptions:
\begin{itemize}
\item[(H1)] $W:\ \R^d \to \SD$ is $\Z^d$-periodic, measurable and bounded.

\item[(H2)] $V_1,\ldots,V_{\DD}$ are nonnegative, bounded, measurable, real-valued functions supported on $[-\frac{1}{2},\frac{1}{2}]^d$. Moreover, we assume that, for every $i\in \{1,\ldots, \DD \}$, there exists  a non-empty cube $\mathcal{C}_{i} \subset [-\frac{1}{2},\frac{1}{2}]^d$ which is not reduced to a single point such that $V_{i}>\mathbf{1}_{\mathcal{C}_{i}}$, where $\mathbf{1}_{\mathcal{C}_{i}}$ is the characteristic function of $\mathcal{C}_{i}$.

\item[(H3)] For every $i\in \{1,\ldots, \DD \}$, $(\omega_{i}^{(n)})_{n\in \Z^d}$ is a family of \emph{i.i.d.} random variables on a complete probability space $(\widetilde{\Omega}_i,\widetilde{\mathcal{A}}_i, \widetilde{\mathsf{P}}_i)$, which are bounded, and whose support of their common law $\nu_i$ contains zero and is not reduced to this single point. Moreover, we assume that,
\begin{equation}\label{eq_cond_va}
\forall i\in \{1,\ldots,\DD \},\ \limsup_{\epsilon \longrightarrow 0^+} \frac{\log |\log \widetilde{\mathsf{P}}_i(\omega_{i}^{(0)} \leq \epsilon)|}{\log \epsilon}=0.
\end{equation}

\end{itemize}
\vskip 2mm
\noindent In particular, we can take Bernoulli random variables for the $\omega_{i}^{(n)}$'s. By adding a suitable constant diagonal matrix to the periodic background $W$, we may always assume that the $\omega_{i}^{(n)}$'s are nonnegative valued (because of their boundedness). If we set
$$(\Omega,\mathcal{A},\mathsf{P})=\left(\bigotimes_{n\in \Z^d}(\widetilde{\Omega}_1\otimes \cdots \otimes \widetilde{\Omega}_{\DD}),\bigotimes_{n\in \Z^d}(\widetilde{\mathcal{A}}_1 \otimes \cdots \otimes \widetilde{\mathcal{A}}_{\DD}), \bigotimes_{n\in \Z^d}(\widetilde{\mathsf{P}}_1 \otimes \cdots \otimes \widetilde{\mathsf{P}}_{\DD})\right),$$
then $(\Omega,\mathcal{A},\mathsf{P})$ is a complete probability space and $\{\Hom\}_{\omega \in \Omega}$ is $\Z^d$-ergodic because of the non-overlapping of the random variables $\omega_{i}^{(n)}$. We denote by $\Sigma$ the almost-sure spectrum of $\Hom$. By adding a suitable scalar matrix $\lambda I_{\DD}$ to the periodic potential $W$, we may always assume that $\mathbf{\inf \Sigma =0}$, where $\Sigma$ is the almost sure spectrum of the $\Z^d$-ergodic family $\{\Hom\}_{\omega \in \Omega}$.

\noindent The model (\ref{eq_model_H}) is a particular case of (\ref{eq_model_Ho}) for which the potential split into a deterministic periodic part $W$ and a random part $V_{\omega}$ which appears as a diagonal matrix. We will denote by $N:\ E\to N(E)$ the IDS of $H_{\omega}$.

\begin{rem}
If we assume that, at least for one $x\in \R^d$, $W(x)$ is not a diagonal matrix, then we cannot write $\Hom$ as a direct sum $\otimes_{i=1}^{\DD} H_{\omega,i}$ of scalar-valued operators $H_{\omega,i}$ acting on $L^2(\R^d)\otimes \C$ and for which all the results we will present here are already known.
\end{rem}

\begin{rem}
The hypothesis $(H2)$ on the boundedness of the functions $V_i$ and the boundedness of their support implies in particular that each $V_i$ is in $L^p(\R^d)\otimes \C^{\DD}$ with $p=2$ if $d\leq 3$, $p>2$ if $d=4$ and $p>\frac{d}{2}$ if $d\geq 5$. These are the assumptions made in \cite{K99}.
\end{rem}
\vskip 3mm

\noindent For $d=1$, matrix-valued operators as (\ref{eq_model_H}) are also called quasi-one-dimensional Anderson models. Localization results in both dynamical and spectral senses for such models, for particular simple choices of $W$ and $V_1,\ldots,V_{\DD}$, are obtained in \cite{B09} and \cite{B10}. These quasi-one-dimensional models are of physical interest as they can be considered as partially discrete approximations of Anderson models on a two-dimensional continuous strip. Such a two-dimensional Anderson model on a continuous strip can, by example, modelize electronic transport in nanotubes. Indeed, a two dimensional continuous Anderson model is defined by:
\begin{equation}\label{eq_def_Hcs}
H_{cs}(\omega)=-\Delta_2 + \sum_{n\in \Z} \omega^{(n)} V(x-n,y),
\end{equation}

\noindent acting on $L^2(\R \times [0,1])\otimes \C$ with Dirichlet boundary conditions on $\R\times \{0\}$ and $\R\times \{1\}$. The $\omega^{(n)}$'s are \emph{i.i.d.} random variables and $V$ is supported in $[0,1]^2$. As the continuous strip $\R\times [0,1]$ has one finite length dimension ($[0,1]$) and one infinite length dimension ($\R$), we can physically consider this strip as a quasi-one-dimensional nanotube. The spectral properties of $H_{cs}(\omega)$ describe properties of the electronic transport in the nanotube $\R \times [0,1]$.

\subsection{The behaviour of the IDS}\label{sec_heuristic_ids} The main result of this paper is about Lifshitz tails for the IDS $N(E)$ of $\Hom$ at the bottom of the spectrum. In 1963, Lifshitz (see \cite{L63}) had conjecture that, for a continuous random Schr\"odinger operator of IDS $N(E)$, there exist $c_{1},c_{2}>0$ such that $N(E)$ satisfies the asymptotic:
\begin{equation}\label{eq_Lif_heuristic}
N(E)\simeq c_1\exp({-c_2(E-E_{0})^{-\frac{d}{2}}}),
\end{equation}
as $E$ tends to $E_0$, where $E_0$ is the bottom of the spectrum of the considered Schr\"odinger operator. The behaviour (\ref{eq_Lif_heuristic}) is known as Lifshitz tails (for more details, see part IV.9.A of \cite{PF}) and the exponent $-d/2$ is called the Lifshitz exponent of the operator. The principal results known on Lifshitz tails are mainly shown for Schr\"odinger operators, in both continuous and discrete cases (see \cite{K89,KM83,K99,N77,P77,S85} and others) and for Schr\"odinger operators with magnetic fields (see \cite{K10,N03} ). Up to our knowledge, all studied examples of Schr\"odinger operators are for scalar-valued operators, and no adaptation of the known results to matrix-valued operators like (\ref{eq_model_H}) has been done yet.
\vskip0.5cm

\noindent In a previous article of one of the author (see \cite{B08}), we had already obtain a result of H\"older continuity of the IDS for a particular example of model $\Hom$, in dimension $d=1$. For $d=1$, we can use the formalism of transfer matrices and define Lyapunov exponents for $\Hom$ and, in this case, the sum of the positive Lyapunov exponents is harmonically conjugated to the IDS of $\Hom$, through a so-called Thouless formula (see \cite[Theorem 3]{B08}). It allows us to prove that under some assumptions on the group generated by the transfer matrices, the F\"urstenberg group of $\Hom$, we have positivity of the Lyapunov exponents, their H\"older continuity with respect to the energy parameter and thus, by the Thouless formula, the same H\"older regularity for the IDS. This assumptions are hard to verify for a general $D$ (where $D$ is the size of the matrix-valued potential), but we were able to verify them for a very particular example of $\Hom$ in dimension $d=1$, where the periodic and random potentials $W$ and $V_{\omega}$ acts like constant functions. In \cite{B10}, we studied the following Anderson operator:
\begin{equation}\label{eq_model_Hl}
H_{\ell}(\omega) = -\frac{\dd^2}{\dd x^2}\otimes I_{\DD} + V + \sum_{n\in \Z}  \left(
\begin{smallmatrix}
c_1 \omega_{1}^{(n)} \mathbf{1}_{[0,\ell]}(x-\ell n) & & 0\\
 & \ddots &  \\
0 & & c_{\DD} \omega_{\DD}^{(n)} \mathbf{1}_{[0,\ell]}(x-\ell n)\\
\end{smallmatrix}\right),
\end{equation}
\noindent acting on $L^2(\R)\otimes \C^{\DD}$, where $D\geq 1$ is an integer, $I_{\DD}$ is the identity matrix of order $D$ and $\ell>0$ is a real number. The matrix $V$ is a real $\DD\times \DD$ symmetric matrix. The constants $c_1,\ldots,c_{\DD}$ are non-zero real numbers. For $I \subset \R$, $\mathbf{1}_{I}$ is the characteristic function of $I$. The random variables $\omega_i^{(n)}$ are like in model (\ref{eq_model_H}). As we can see, $H_{\ell}(\omega)$ is a particular example of $\Hom$ with $W$ constant and $V_i=\mathbf{1}_{[0,\ell]}$, for every $i\in \{1,\ldots,\DD\}$.
\noindent For this operator we had obtained the following regularity result:
\begin{Pro}\cite[Proposition 6.2]{B10}\label{prop_ids_holder}
For Lebesgue-almost every $V\in \SD$, there exist a finite set $\mathcal{S}_V\subset \R$ and a real number $\ell_C:=\ell_C(\DD,V) >0$ such that, for every $\ell\in (0,\ell_C)$, there exists a compact interval $I(\DD,V,\ell)\subset \R$ such that, if $I\subset I(\DD,V,\ell)\setminus \mathcal{S}_V$ is an open interval, then the integrated density of states of $H_{\ell}(\omega)$, $E\mapsto N_{\ell}(E)$, is H\"older continuous on $I$.
\end{Pro}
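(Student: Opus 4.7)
The plan is to reduce H\"older continuity of $N_\ell$ to H\"older continuity of the Lyapunov exponents of $H_\ell(\omega)$, and then to verify a F\"urstenberg-type density condition on the associated group of transfer matrices. Because $d=1$ and the potential is constant on each interval $[\ell n,\ell(n+1)]$, one rewrites $H_\ell(\omega)u=Eu$ as a first-order system in $\C^{2\DD}$ whose solutions are transported from one cell to the next by symplectic transfer matrices $A_n^\ell(E,\omega)\in \mathrm{Sp}_{2\DD}(\R)$. Under (H3) the sequence $(A_n^\ell(E,\omega))_{n\in\Z}$ is i.i.d., so Oseledets' theorem produces $2\DD$ Lyapunov exponents coming in opposite pairs $\pm\gamma_1(E)\geq\cdots\geq\pm\gamma_\DD(E)$, and the Thouless formula of \cite[Theorem 3]{B08} identifies $N_\ell$ as the Hilbert transform of $\gamma_1+\cdots+\gamma_\DD$ up to an affine term. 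Hence H\"older regularity of $N_\ell$ on an open interval $I$ will follow from joint H\"older regularity of all $\gamma_j$ on a slightly larger interval.

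To obtain the latter I would invoke the Gol'dsheid-Margulis and Le Page framework for products of random symplectic matrices: if, at every $E$ in the target interval, the F\"urstenberg group $G_{\ell,E}$ generated by the support of the common law of $A_0^\ell(E,\omega)$ is Zariski dense in $\mathrm{Sp}_{2\DD}(\R)$, then $\gamma_1(E)>\cdots>\gamma_\DD(E)>0$ and each $\gamma_j$ is locally H\"older continuous in $E$ with a uniform exponent. So the problem reduces to establishing Zariski density of $G_{\ell,E}$ for $E$ in a compact interval away from an exceptional finite set.

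The main obstacle is verifying this density condition for the specific matrix-valued model, in which the coupling between the $\DD$ diagonal components is carried solely by the constant symmetric matrix $V$. My approach is perturbative in $\ell$: for $\ell$ small, each transfer matrix is close to $\exp(\ell X(E,\omega))$ for an explicit $X(E,\omega)\in \mathfrak{sp}_{2\DD}(\R)$ depending analytically on $E$ and linearly on $(\omega_1,\ldots,\omega_\DD)$ and on the entries of $V$. Zariski density of $G_{\ell,E}$ then follows from showing that the Lie subalgebra $\mathfrak{g}(E,V)$ generated by $\{X(E,\omega):\omega\in \mathrm{supp}\,\nu_1\times\cdots\times \mathrm{supp}\,\nu_\DD\}$ equals $\mathfrak{sp}_{2\DD}(\R)$, which amounts to the non-vanishing of a finite family of minors $P_1(E,V),\ldots,P_N(E,V)$ polynomial in $E$ and in the entries of $V$. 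For Lebesgue-almost every $V$, none of the $P_j(\cdot,V)$ is identically zero, so their common zero set $\mathcal{S}_V$ is finite; the compact interval $I(\DD,V,\ell)$ is then the connected piece of the spectrum of $H_\ell(\omega)$ on which the perturbative Campbell-Hausdorff expansion accurately relates $\mathfrak{g}(E,V)$ to $G_{\ell,E}$, and $\ell_C(\DD,V)$ is the threshold past which this comparison breaks down. Combining the resulting Le Page H\"older estimate with the Thouless formula yields the stated H\"older continuity of $N_\ell$ on any open $I\subset I(\DD,V,\ell)\setminus\mathcal{S}_V$.
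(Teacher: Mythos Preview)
The paper does not give its own proof of this proposition: it is quoted verbatim from \cite[Proposition 6.2]{B10}, and the surrounding paragraph merely sketches the strategy of \cite{B08,B10} (transfer matrices in $\mathrm{Sp}_{2\DD}(\R)$, F\"urstenberg group, positivity and H\"older continuity of the Lyapunov exponents, Thouless formula). Your outline is exactly this strategy --- reduction to Lyapunov exponents via the Thouless formula of \cite[Theorem 3]{B08}, Zariski density of the F\"urstenberg group via the Gol'dsheid--Margulis criterion, and a small-$\ell$ Lie-algebra computation producing the finite exceptional set $\mathcal{S}_V$ and the threshold $\ell_C(\DD,V)$ --- so it agrees with the approach the paper attributes to \cite{B10}.
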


\begin{rem}
We actually proved even more : in such an open interval $I$ with $\Sigma \cap I \neq \emptyset$, we have Anderson localization in both spectral and dynamical senses.
\end{rem}

\noindent The Proposition \ref{prop_ids_holder} is interesting in itself but doesn't give any information about the behaviour of the IDS at the bottom of spectrum and, until now, it was not clearly stated that it has a Lifshitz behaviour. One of the motivation of the present article is to fill this lack of information of the IDS for quasi-one-dimensional operators and in particular those like $H_{\ell}(\omega)$ we have studied before from the localization point of view.

\subsection{The result}\label{sec_result} We can now state the main result of the present article.
\begin{theo}\label{thm_main}
Let $\Hom$ be the operator defined by (\ref{eq_model_H}) and let $N$ be its IDS. We assume hypothesis $(H1)$, $(H2)$  and $(H3)$ and we also assume that  $\inf \Sigma = 0$. Then,
\begin{equation}\label{eq_main_thm}
\lim_{E \to 0^{+}} \frac{\log|\log \big( N(E)-N(0^+) \big)|}{\log (E)}= -\frac{d}{2}.
\end{equation}
In particular, this limit does not depend on D.
\end{theo}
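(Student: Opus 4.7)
The plan is to derive matching asymptotic upper and lower bounds on $N(E)-N(0^+)$ as $E\to 0^+$, both of the form $\exp(-c\,E^{-d/2})$ up to sub-exponential factors, by adapting the classical Dirichlet/Neumann bracketing scheme (Kirsch--Simon, Kirsch--Martinelli, Klopp) to the present matrix-valued setting. A preliminary reduction is that, since $V_{\omega}\ge 0$ and $0\in\mathrm{supp}(\nu_i)$ for each $i$, the bottom of the deterministic periodic operator $H=-\Delta_d\otimes I_D+W$ coincides with $\inf\Sigma=0$. Floquet--Bloch theory applied to the $\Z^d$-periodic matrix-valued operator $H$ provides a vector-valued Bloch state $u_0\in L^2_{\mathrm{loc}}(\R^d)\otimes\C^D$ with $Hu_0=0$, periodic and normalized on the unit cell, together with a spectral gap above $0$ of order $L^{-2}$ for the periodic operator restricted to $C_L$. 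Under (H2) one obtains a lower bound of the form $\langle u_0,V_{\omega}u_0\rangle\ge c\sum_{n}\sum_{i}\omega_{i}^{(n)}$ with $c>0$, provided the components $u_0^{(i)}$ do not vanish identically on the cubes $\mathcal{C}_i$.

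\textbf{Upper bound.} Let $H^{N}_{\omega,C_L}$ be the Neumann restriction of $\Hom$ to $C_L$. A matrix-valued Neumann bracketing argument gives
\[
N(E)\ \le\ \frac{K_L}{|C_L|}\,\mathsf{P}\bigl(E_0(H^{N}_{\omega,C_L})\le E\bigr),
\]
with $K_L$ a polynomial-in-$L$ uniform bound on the number of Neumann eigenvalues below a fixed threshold, obtained from the Feynman--Kac kernel estimates of \cite{B08}. Apply Temple's inequality with trial function a cut-off of the Bloch state $u_0$ restricted to $C_L$: using the $L^{-2}$ spectral gap and the coupling estimate above, one obtains
\[
E_0\bigl(H^{N}_{\omega,C_L}\bigr)\ \ge\ \frac{c_1}{L^2}\,+\,\frac{c_2}{|C_L|}\sum_{n\in C_L\cap\Z^d}\sum_{i=1}^{D}\omega_{i}^{(n)},
\]
for positive constants $c_1,c_2$. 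Choosing $L(E)=\lfloor E^{-1/2+\delta}\rfloor$ forces, on $\{E_0(H^{N}_{\omega,C_L})\le E\}$, that $\omega_{i}^{(n)}\le \epsilon(E)\to 0$ for a positive fraction of the $(n,i)$. Independence together with the log--log condition (H3) then gives $\mathsf{P}(E_0(H^{N}_{\omega,C_L})\le E)\le\exp(-c_3 L^d)$, which yields
\[
\limsup_{E\to 0^+}\frac{\log|\log(N(E)-N(0^+))|}{\log E}\ \le\ -\tfrac{d}{2}.
\]

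\textbf{Lower bound and main obstacle.} For the matching lower bound, use Dirichlet bracketing: on the event $\{\omega_{i}^{(n)}\le\epsilon:\ n\in C_L\cap\Z^d,\ 1\le i\le D\}$, a test function obtained by multiplying $u_0$ by a smooth cutoff vanishing on $\partial C_L$ has Rayleigh quotient at most $c_4/L^2+c_5\epsilon$. By independence and (H3), this event has probability at least $\exp(-c_6 L^d|\log\epsilon|^{\alpha})$ for any $\alpha>0$; choosing $L\sim E^{-1/2}$ and $\epsilon\sim E$ yields $N(E)-N(0^+)\ge |C_L|^{-1}\exp(-c_7\,E^{-d/2})$ and hence the reverse inequality. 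The main technical obstacle, and what is genuinely new beyond the scalar setting, is the preliminary coupling step: in the scalar case the periodic ground state is automatically strictly positive (Perron--Frobenius), so it couples non-trivially with any nonnegative single-site perturbation, but for matrix-valued $H$ one must rule out that the vector-valued Bloch state $u_0$ vanishes identically on $\mathcal{C}_i$ in the $i$-th component for some $i$. Both bracketing constants $c_1$ and $c_4$ arise from a \emph{scalar} second-order localization cost of order $L^{-2}$ regardless of $D$, and this is ultimately the structural reason why the Lifshitz exponent $-d/2$ is independent of the matrix size.
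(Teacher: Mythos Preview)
Your bracketing/Temple approach is genuinely different from the paper's route. The paper does \emph{not} use Neumann bracketing and Temple's inequality for the upper bound; instead it follows Klopp's scheme: one first reduces to the discretized operator $H_{\omega}^{0}=\Pi_{0}H_{\omega}\Pi_{0}$ on the low-energy Wannier subspace (Theorem~\ref{thm_bb22}), then, via Floquet analysis near the finitely many minima of the first bands and a partition of unity (Lemma~\ref{6.5}), compares $H_{\omega}^{0}$ to a \emph{direct sum} of scalar discrete Anderson models $h_{\omega,i}^{\mathrm{And}}=-\Delta_{\Z^d}+\sum_{n}\omega_{i}^{(n)}\pi_{n}$ on $\ell^{2}(\Z^d)$, for which the Lifshitz exponent $-d/2$ is classical (Simon). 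The lower bound in the paper is closer in spirit to yours, but with a twist: rather than using the full $\C^{D}$-valued ground state, one extracts a single nonzero component and works with the test vector $(f_{1},0,\dots,0)$ (see (\ref{mai2})), so only the random variables $\omega_{1}^{(n)}$ enter the error estimate.

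The obstacle you flag is a genuine gap in your upper bound, not merely a technicality. Temple's inequality requires a lower bound of the form $\langle u_{0},V_{\omega}u_{0}\rangle\ge c\sum_{n}\omega_{i_{0}}^{(n)}$ for at least one index $i_{0}$, i.e.\ $\int V_{i_{0}}|u_{0}^{(i_{0})}|^{2}>0$. In the scalar case this is free by Harnack/Perron--Frobenius, but for a bounded measurable matrix potential $W$ there is no a priori reason why the $i$-th component of the Bloch ground state cannot vanish on $\mathcal{C}_{i}$ for every $i$ simultaneously; unique continuation for the system only forbids the full vector $u_{0}$ from vanishing on an open set, not individual components on the respective $\mathcal{C}_{i}$. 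You identify the issue but do not resolve it, and without this coupling the Temple step collapses. The paper's reduction avoids the problem entirely because the comparison in Lemma~\ref{6.5} is at the level of the whole spectral projection $\Pi_{0}$ and produces all $D$ discrete Anderson models as summands, so no single-eigenvector positivity is ever invoked. If you want to push your more elementary route through, you would need either a matrix-valued strong-positivity statement for the periodic ground state (which is not available under (H1) alone) or an alternative to Temple that does not rely on it.
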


\begin{rem}
\noindent (1) Under some assumption on the behavior of the integrated density of states of the background operator $H$, it might be possible to obtain a result for internal bands.

\noindent (2) Theorem \ref{thm_main} could be used to give a different proof of localization than the one provided in \cite{B09} (see \cite{DS01,NTh,Nloca,V02}).
\end{rem}

\noindent It is important to insist on the fact that the Lifshitz exponent $-d/2$ obtained here does not depend on the integer $D\geq 1$. It means that, looking only at the Lifshitz behaviour of the IDS at the bottom of the spectrum of the considered operator, we cannot distinguished a quasi-$d$-dimensional Anderson model like (\ref{eq_model_H}) from a $d$-dimensional Anderson model (for $D=1$).

\noindent One of the motivations in considering matrix-valued Anderson models is that we could expect, as $D$ tends to infinity, that we could obtain information about a $(d+1)$-dimensional Anderson model from a quasi-$d$-dimensional Anderson model. In particular, by obtaining a localization result for (\ref{eq_model_Hl}) for an arbitrary $D\geq 1$, we could have expect to obtain a similar localization result for the continuous strip (\ref{eq_def_Hcs}). The presence of the Lifshitz tails behaviour of the IDS is usually a strong sign of the presence of localization at the bottom of the spectrum. If we wanted to use the Lifshitz tails behaviour of the IDS to prove localization (like in \cite{DS01}) and at the same time following the idea of approaching a $(d+1)$-dimensional Anderson model by a quasi-$d$-dimensional Anderson model, we would have expected a Lifshitz exponent depending on $D$ in a way such that this exponent would tend to $-(d+1)/2$ as $D$ tends to infinity. But, Theorem \ref{thm_main} contradicts this. So, we actually obtained an argument in favor of the idea that we cannot really get a proof of localization in dimension $2$ (or more generally in dimension $d+1$) by an approximation procedure using quasi-one-dimensional Anderson models. This, at least if we follow a localization proof based upon the Lifshitz tails behaviour of the IDS.

\section{Matrix-valued Floquet theory}\label{sec_floquet}

\subsection{Matrix-valued Floquet decomposition} In this section, we review the main results of the Floquet theory for the deterministic operator
\begin{equation}\label{eq_H_per}
H=-\Delta_d\otimes I_{\DD} + W
\end{equation}
acting on $L^2(\R^d)\otimes \C^{\DD}$, with periodic potential $W$, and we adapt them to the matrix-valued setting.
\noindent More precisely, we assume here that $W$ is a $\Z^d$-periodic function in $L^p(\R^d)\otimes \SD$ with $p=2$ if $d\leq 3$, $p>2$ if $d=4$ and $p>\frac{d}{2}$ if $d\geq 5$. If $W$ is $\Z^d$-periodic, measurable and bounded as in model (\ref{eq_model_H}), it is in such an $L^p(\R^d)\otimes \SD$ space. Then, $H$ is essentially selfadjoint on $C^{\infty}_0(\R^d)\otimes \C^{\DD}$ (the space of compactly supported function, $\C^{\DD}$-valued, of class $C^{\infty}$) with domain the Sobolev space $H^2(\R^d)\otimes \C^{\DD}$ \cite{RS2}.

\noindent First of all, let us notice that the formalism and all the general results about constant fiber direct integrals are still valid in our setting of matrix-valued operators. We refer to \cite[Section XIII.16]{RS4} for a complete presentation of these results.

\noindent For $y\in \R^d$, we denote by $\tau_y$ the operator of translation by $y$ which is defined, for $u\in L^2(\R^d)\otimes \C^{\DD}$ and $x\in \R^d$, by $(\tau_y u)(x)=u(x-y)$. Then, because $W$ is $\Z^d$-periodic, the operator $H$ is invariant by conjugation by $\tau_n$, for every $n\in \Z^d$: $\forall n\in \Z^d,\ \tau_n \circ H \circ \tau_n^* = \tau_n \circ H \circ \tau_{-n} = H.$

\noindent Thus, $H$ is a $\Z^d$-periodic operator. Let $(e_1,\ldots,e_d)$ be the canonical basis of $\R^d$. We recall that $C_0$ can be considered as the fundamental cell of the lattice $\Z^d$,
$$C_0=\left\{ x_1 e_1 + \ldots + x_d e_d\ \Big|\ \forall j\in \{1,\ldots,d\}, -\frac{1}{2} \leq x_j \leq \frac{1}{2} \right\}.$$
If $C_0^*$ is the fundamental cell of the dual lattice $(\Z^d)^* \simeq 2\pi \Z^d$, then $C_0^*$ is identified to the torus $\T^*=\R^d /2\pi\Z^d$. Let $\theta \in \T^{*}$. We denote by $\mathcal{D}_{\theta}^{'}$ the space of $\C^{\DD}$-valued, $\theta$-quasiperiodic distributions in $\R^d$, which is the space of distributions $u\in \mathcal{D}^{'}(\R^d)\otimes \C^{\DD}$ such that, for any $n\in \Z^d$, $\tau_n u= \ee^{-\ii n.\theta}u$. Let $\mathcal{H}_{\theta}=(L_{\mathrm{loc}}^2 (\R^d)\otimes \C^{\DD})\cap \mathcal{D}_{\theta}^{'}$, endowed with the norm on $L^2(C_0)\otimes \C^{\DD}$.
\noindent We also define, for $k\in \Z$, the spaces $\mathcal{H}_{\theta}^{k}=(H_{\mathrm{loc}}^k (\R^d)\otimes \C^{\DD})\cap \mathcal{D}_{\theta}^{'}$, where $H_{\mathrm{loc}}^k (\R^d)\otimes \C^{\DD}$ is the space of distributions that locally belong to the Sobolev space $H^k(\R^d)\otimes \C^{\DD}$.
\noindent In order to define the Fourier decomposition we will use later, it remains to define the space:
$$\mathcal{H}=\big\{ u\in (L_{\mathrm{loc}}^2 (\R^d) \otimes L^2(\T^*))\otimes \C^{\DD}\ \big| \ \forall (x,\theta,n)\in \R^d \times \T^* \times \Z^d,\ u(x+n,\theta)=\ee^{\ii n.\theta}u(x,\theta) \big\},$$
endowed with the norm:
$$\forall u\in \mathcal{H},\ ||u||_{\mathcal{H}}= \frac{1}{\mathrm{vol}(\T^*)}\int_{\T^*} ||u(.,\theta)||_{L^2(C_0)\otimes \C^{\DD}} \; \dd \theta.$$
For $\theta \in \R^d$ and $u\in L^2(\R^d)\otimes \C^{\DD}$, we define $Uu\in \mathcal{H}$ by
\begin{equation}\label{eq_def_U}
\forall x\in \R^d,\ \forall \theta \in \T^*,\ (Uu)(x,\theta)=\sum_{n\in \Z^d} \ee^{\ii n.\theta}(\tau_n u)(x)= \sum_{n\in \Z^d} \ee^{\ii n.\theta} u(x-n).
\end{equation}
Actually, the expression (\ref{eq_def_U}) is well-defined for $u\in \mathcal{S}(\R^d)\otimes\C^{\DD}$, the Schwartz space, and by Parseval theorem, this expression can be extended as an isometry from $L^2(\R^d)\otimes\C^{\DD}$ to $\mathcal{H}$.
\noindent For every $v\in \mathcal{H}$, we can define $U^*$, the inverse of $U$ by:
\begin{equation}\label{eq_def_Ustar}
\forall x\in \R^d,\ (U^{*}v)(x)=\frac{1}{\mathrm{vol}(\T^*)}\int_{\T^*} v(x,\theta)\; \dd \theta.
\end{equation}
Indeed, we have, for $v\in \mathcal{H}$ and $x\in \R^d$,
\begin{eqnarray}
(UU^{*}v)(x) & = & \sum_{n\in \Z^d} \ee^{\ii n.\theta} (U^{*}v)(x-n) =  \sum_{n\in \Z^d} \ee^{\ii n.\theta} \frac{1}{\mathrm{vol}(\T^*)}\int_{\T^*} v(x-n,\theta)\; \dd \theta \nonumber \\
 & = & \sum_{n\in \Z^d} \ee^{\ii n.\theta} \frac{1}{\mathrm{vol}(\T^*)}\int_{\T^*} \ee^{-\ii n.\theta}v(x,\theta)\; \dd \theta  = \sum_{n\in \Z^d} \ee^{\ii n.\theta} \hat{v}_n(x) = v(x,\theta). \nonumber
\end{eqnarray}
Thus, $U^{*}$ is a left inverse for $U$ which is an isometry from $L^2(\R^d)\otimes\C^{\DD}$ to $\mathcal{H}$, therefore $U$ is unitary and $U^{*}=U^{-1}$.
\noindent To obtain a Floquet decomposition for the operator $H$, it remains to prove that the operators $U$ and $H$ commute. As, for any $j\in \{1,\ldots,d\}$ and any $n\in \Z^d$, the partial derivation $\partial_j$ commute with the translation $\tau_n$, we have $[\partial_j,U]=0$. Thus, $[-\Delta_d \otimes I_{\DD},U]=0$. Then, using the $\Z^d$-periodicity of $W$, we also have, for every $u\in L^2(\R^d)\otimes\C^{\DD}$, every $x\in \R^d$ and every $\theta \in \T^*$,
\begin{eqnarray}
(U\circ W)(u)(x,\theta) & = &  \sum_{n\in \Z^d} \ee^{\ii n.\theta} W(x-n)u(x-n) =  \sum_{n\in \Z^d} \ee^{\ii n.\theta} W(x)u(x-n) \nonumber \\
 & = & W(x) \sum_{n\in \Z^d} \ee^{\ii n.\theta} u(x-n) = W(x) (Uu)(x,\theta) = (W\circ U)(u)(x,\theta), \nonumber
\end{eqnarray}
as at $x$ fixed, the multiplication by $W(x)\in \SD$ is continuous. Thus, $[W,U]=0$ and we finally have $[H,U]=0$. As we can see, even in the matrix-valued case we still have that $H$ and $U$ commute.
 Following \cite{RS4}, we deduce that $H$ admits the Floquet decomposition:
\begin{equation}\label{eq_Floquet}
UHU^{*}=\int_{\T^*}^{\oplus} H_{\theta} \; \dd \theta,
\end{equation}
where $H_{\theta}$ is the selfadjoint operator $H$ acting on $\mathcal{H}_{\theta}$ with domain $\mathcal{H}_{\theta}^2$. Having this Floquet decomposition, we can continue to follow \cite{RS4} to obtain that $H_{\theta}$ has a compact resolvent. It is a consequence on the assumptions made on the $L^p$-regularity of $W$ which ensure that $H$ is elliptic. As $H_{\theta}$ has a compact resolvent, its spectrum is discrete and we denote by
$$E_0(\theta)\leq E_1(\theta)\leq \ldots \leq E_j(\theta) \leq \ldots$$
its eigenvalues, called the \emph{Floquet eigenvalues} of $H$. Moreover, the functions $\theta \mapsto E_j(\theta)$, for $j\in \N$, are continuous and, if $j$ tends to infinity, then $E_j(\theta)$ tends to $+\infty$, uniformly in $\theta$. Actually, as $H_{\theta}$ depends analytically on $\theta$, we also have that $\theta \mapsto E_j(\theta)$ is an analytic function in the neighborhood of any point $\theta^{0} \in \T^*$ such that $E_j(\theta^0)$ is an eigenvalue of multiplicity one of $H_{\theta^0}$.

\noindent The set $E_j(\T^*)$ is a closed interval called the $j$-th spectral band of $H$ and the spectrum of $H$ is given by
$$\sigma(H)=\bigcup_{j\in \N} E_j(\T^*).$$
If $d\geq 2$, the bands can overlap, but it is not the case in dimension $1$ (except maybe at an edge point).

\begin{rem}
Heuristically, the difference between the usual scalar-valued case ($D=1$) and the matrix-valued case is that they are ``$D$ times'' more Floquet eigenvalues in the matrix-valued case and thus the multiplicities of the $E_j(\theta)$'s are \emph{a priori} bigger in the matrix-valued case than in the scalar-valued case.
\end{rem}

\noindent We finish this section by proving a result of nondegeneracy of the minimum of the first Floquet eigenvalue.

\begin{Pro}\label{pro2.2}
Let $\theta^0 \in \T^*$ be a minimum of $\theta \mapsto E_0(\theta)$. Then, there exist $\delta >0$ and $C>0$ such that
\begin{equation}\label{eq_min}
\forall \theta \in \T^*,\ |\theta-\theta^0|<\delta \ \Rightarrow
\ C |\theta-\theta^0|^2 \leq E_0(\theta)-E_0(\theta^0) \leq |\theta-\theta^0|^2 .
\end{equation}
\end{Pro}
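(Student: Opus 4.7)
The plan is to prove the two inequalities by different methods: the upper bound via a variational trial-function argument, and the lower bound via second-order analytic perturbation theory that requires the simplicity of the ground state.

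For the upper bound, the first preparatory step is to establish that $E_0(\theta^0)$ is a simple eigenvalue of $H_{\theta^0}$; in the scalar case this is classical (Perron-Frobenius applied to the positivity-improving semigroup $\ee^{-t H_{\theta^0}}$), and in the matrix-valued setting it can be derived from the positivity properties of the matrix-valued Feynman-Kac kernel used in \cite[Proposition 1]{B08}. Letting $u_0 \in \mathcal{H}_{\theta^0}$ be a normalized ground state, I would test the Rayleigh quotient on $\phi(x) := \ee^{\ii(\theta-\theta^0)\cdot x}u_0(x)$, which one checks belongs to $\mathcal{H}_\theta$ and has unit norm. A direct computation using the product rule on $-\Delta$ and the fact that the scalar factor $\ee^{\ii(\theta-\theta^0)\cdot x}$ commutes with the matrix-valued multiplication operator $W$ yields
\[
\langle \phi, H\phi\rangle = E_0(\theta^0) + |\theta-\theta^0|^2 + 2(\theta-\theta^0)\cdot p,
\]
where $p := \langle u_0, -\ii\nabla u_0\rangle$ lies in $\R^d$ (as follows from integration by parts and the fact that $|u_0|^2$ is $\Z^d$-periodic). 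The minimality of $\theta^0$ forces $p = 0$: otherwise, taking $\theta - \theta^0 = -\varepsilon p$ with $\varepsilon > 0$ small would make the right-hand side strictly less than $E_0(\theta^0)$, contradicting $E_0(\theta) \geq E_0(\theta^0)$. The min-max principle then gives $E_0(\theta) \leq E_0(\theta^0) + |\theta-\theta^0|^2$.

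For the lower bound, simplicity of $E_0(\theta^0)$ combined with Kato's analytic perturbation theory makes $\theta \mapsto E_0(\theta)$ real-analytic in a neighborhood of $\theta^0$, and since $\theta^0$ is a minimum the gradient vanishes there, so that $E_0(\theta^0+k) - E_0(\theta^0) = \tfrac{1}{2}\, k^\top M\, k + O(|k|^3)$ with the Hessian $M$ positive semi-definite. Rayleigh-Schr\"odinger perturbation theory, applied to the expansion $H(\theta^0+k) = H(\theta^0) + 2k\cdot(-\ii\nabla+\theta^0) + |k|^2$ after unitary conjugation to the periodic setting, provides the explicit effective-mass formula
\[
\frac{1}{2}\,v^\top M v = |v|^2 - 4\sum_{n\geq 1}\frac{|\langle u_n,\, v\cdot(-\ii\nabla+\theta^0)\,u_0\rangle|^2}{E_n(\theta^0)-E_0(\theta^0)},
\]
where $(u_n)_{n\geq 0}$ is an orthonormal basis of eigenfunctions of $H_{\theta^0}$. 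The main obstacle is to show that $M$ is strictly positive definite: the spectral gap $E_1(\theta^0) > E_0(\theta^0)$ controls the denominators from below, but ruling out the saturation $v^\top M v = 0$ requires excluding an overdetermined spectral-concentration condition on $(v\cdot(-\ii\nabla+\theta^0))u_0$ which would be incompatible with $u_0$ being the unique (up to phase) ground state of the elliptic operator $H_{\theta^0}$. Once $M \geq c_0 I$ is established with $c_0 > 0$, one shrinks $\delta$ so that the cubic remainder is absorbed into half the quadratic main term, yielding the lower bound with $C = c_0/4$.
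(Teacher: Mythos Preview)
Your upper bound argument is essentially correct and close in spirit to the paper's: both test the Rayleigh quotient with a function of the form $\ee^{\ii(\theta-\theta^0)\cdot x}$ times a ground-state object, and both exploit minimality at $\theta^0$ to kill the linear term. (In fact your argument for $p=0$ works even without simplicity, so that hypothesis is not needed for the upper bound.)

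The lower bound, however, has a genuine gap. Your whole strategy hinges on simplicity of $E_0(\theta^0)$, which you claim follows from positivity of the matrix-valued Feynman--Kac kernel. In the matrix-valued setting this Perron--Frobenius mechanism is \emph{not} available in general: the heat kernel of $H_{\theta^0}$ is $\mathrm{M}_D(\C)$-valued, and there is no canonical positivity-improving property on $L^2(C_0)\otimes\C^D$ that forces a one-dimensional ground-state space. A concrete counterexample is $W\equiv 0$: then $H=-\Delta_d\otimes I_D$, the minimum of $E_0$ is at $\theta^0=0$, and $E_0(\theta^0)=0$ is $D$-fold degenerate. The paper even remarks that multiplicities are \emph{a priori} larger in the matrix-valued case. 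Once simplicity fails, Kato's analytic perturbation theory for a simple eigenvalue no longer applies to $E_0(\theta)$ near $\theta^0$, the Hessian $M$ is not well defined in your sense, and your effective-mass formula collapses. Even in the simple case, your argument that $M$ is strictly positive definite (``excluding an overdetermined spectral-concentration condition'') is not a proof; this is the heart of the matter and you have only gestured at it.

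The paper circumvents both issues by following Kirsch--Simon \cite{KS87} and Klopp \cite{K99}: it performs a ground-state transformation using a matrix-valued object $\mathcal{S}(\cdot,\theta^0)\in\mathrm{GL}_D(\C)$ (rather than a single eigenvector) to obtain the variational identity $E_0(\theta)-E_0(\theta^0)=\inf_u \|\nabla u\|_{\theta^0}^2/\|u\|_{\theta^0}^2$ over a suitable domain. The upper bound then comes from the trial function $u_1=(0,\ldots,\ee^{\ii(\theta-\theta^0)\cdot},\ldots,0)$, and the lower bound from a Taylor expansion of the analytic family $\theta\mapsto H_\theta$ at $\theta^0$ together with a compactness argument, which yields the quadratic lower bound directly without ever invoking simplicity or computing an effective-mass tensor.
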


\noindent This proposition means that a minimum of the function $\theta \mapsto E_0(\theta)$ is always nondegenerate.

\begin{proof}
We will follow the ideas of \cite{KS87} and \cite{K99} and adapt them to the matrix-valued case. Let $\mathcal{S}(\cdot,\theta^0)\in \mathrm{GL}_{\mathrm{D}}(\C)$ be the fundamental solution of the differential system $H_{\theta^0} \psi = E_0(\theta^0) \psi$. We have,
\begin{equation}\label{eq_min1}
H_{\theta^0} \mathcal{S}(\cdot,\theta^0) = E_0(\theta^0) \mathcal{S}(\cdot,\theta^0).
\end{equation}
We define a new scalar product on the space $\mathcal{H}_{\theta^0}$ by:
$$\forall f,g \in \mathcal{H}_{\theta^0},\ <f,g>_{\theta^0} = \int_{C_0} <\mathcal{S}(x,\theta^0)f(x), \mathcal{S}(x,\theta^0) g(x)>_{\C^{\mathrm{D}}} \dd x.$$
We denote by $\tilde{\mathcal{H}}_{\theta^0}$ the Hilbert space $(\mathcal{H}_{\theta^0},<\cdot,\cdot>_{\theta^0} )$. Then, for every $\theta \in \T^*$, we define on $\tilde{\mathcal{H}}_{\theta^0}$ the operator $\tilde{H}_{\theta}$ of domain
$$D(\tilde{H}_{\theta})=\left\{ u\in \mathcal{H}_{\theta^0} \ |\ \mathcal{S}(\cdot,\theta^0)u \in \mathcal{H}_{\theta}^2 \right\},$$
and given by
$$\forall u \in D(\tilde{H}_{\theta}),\ \forall x\in \R^d,\ (\tilde{H}_{\theta}u)(x)=\mathcal{S}(x,\theta^0)^{-1} (H_{\theta} - E_0 (\theta^0)) (\mathcal{S}(x,\theta^0)u)(x).$$
Let  $\theta \in \T^*$. For every $u \in D(\tilde{H}_{\theta})$,
\begin{eqnarray}
 <u,\tilde{H}_{\theta} u>_{\theta^0} & = &   \int_{C_0} <  (\mathcal{S}(x,\theta^0)u)(x),(H_{\theta} - E_0 (\theta^0)) (\mathcal{S}(x,\theta^0)u)(x) >_{\C^{\mathrm{D}}} \dd x \nonumber \\
 & = & \sum_{i=1}^D  \int_{C_0} ((\mathcal{S}(x,\theta^0)u)(x))_i \ \overline{((H_{\theta} - E_0 (\theta^0)) (\mathcal{S}(x,\theta^0)u)(x))_i} \dd x , \nonumber
\end{eqnarray}
where $(v)_i$ denote the i-th coordinate of the vector $v\in \C^{\mathrm{D}}$. Let $i\in \{1,\ldots,D\}$. Since $\mathcal{S}(x,\theta^0)$ is invertible for every $x\in C_0$, one can perform the same change of coordinates in
$$ \int_{C_0} ((\mathcal{S}(x,\theta^0)u)(x))_i \ \overline{((H_{\theta} - E_0 (\theta^0)) (\mathcal{S}(x,\theta^0)u)(x))_i} \dd x$$
than the one done in \cite[(4.6)]{DS84}. Then we can follow the proof of \cite[Proposition 4.4D]{DS84} to obtain :
$$\int_{C_0} ((\mathcal{S}(x,\theta^0)u)(x))_i \ \overline{((H_{\theta} - E_0 (\theta^0)) (\mathcal{S}(x,\theta^0)u)(x))_i} \dd x = \int_{C_0} ((\mathcal{S}(x,\theta^0) \nabla u)(x))_i \ \overline{((\mathcal{S}(x,\theta^0) \nabla u)(x))_i } \dd x$$
and thus, by summing over  $i\in \{1,\ldots,D\}$,
\begin{equation} \label{eq_min2}
  \forall u \in D(\tilde{H}_{\theta}),\ <u,\tilde{H}_{\theta} u>_{\theta^0}  =  || \nabla u ||_{\theta^0}^2.
\end{equation}
We have, using (\ref{eq_min2}) at the second equality, for every $u \in D(\tilde{H}_{\theta})$,
\begin{eqnarray}
 ( E_0(\theta)-E_0(\theta^0))<u,u>_{\theta^0} & = & <u,E_0(\theta)u>_{\theta^0} - <u,\mathcal{S}(x,\theta^0)^{-1} H_{\theta} \mathcal{S}(x,\theta^0) u>_{\theta^0} \nonumber \\
 &\ & + <u,\mathcal{S}(x,\theta^0)^{-1} (H_{\theta}-E_0(\theta^0)) \mathcal{S}(x,\theta^0) u>_{\theta^0} \nonumber \\
 & = & -  <u,\mathcal{S}(x,\theta^0)^{-1} (H_{\theta}- E_0(\theta)) \mathcal{S}(x,\theta^0) u>_{\theta^0} +  || \nabla u ||_{\theta^0}^2.
\end{eqnarray}
Since, by definition, $E_0(\theta)$ is the smallest eigenvalue of the compact resolvent operator $H_{\theta}$, one has
{\small $$ <u,\mathcal{S}(x,\theta^0)^{-1} (H_{\theta}- E_0(\theta)) \mathcal{S}(x,\theta^0) u>_{\theta^0} =  \int_{C_0} <  (\mathcal{S}(x,\theta^0)u)(x),(H_{\theta} - E_0 (\theta)) (\mathcal{S}(x,\theta^0)u)(x) >_{\C^{\mathrm{D}}} \dd x \geq 0.$$ }
Thus,
\begin{equation}
\forall u \in D(\tilde{H}_{\theta}),\ ( E_0(\theta)-E_0(\theta^0)) || u ||_{\theta^0}^2 \leq || \nabla u ||_{\theta^0}^2.
\end{equation}
Moreover, if one choose $u_0=S(\cdot,\theta^0)^{-1}S(\cdot,\theta)(1,0,\ldots,0)$, then $S(\cdot,\theta^0)u_0$ is an eigenfunction of $H_{\theta}$ associated to $E_0(\theta)$ and $ <u_0,\mathcal{S}(\cdot,\theta^0)^{-1} (H_{\theta}-E_0(\theta)) \mathcal{S}(\cdot,\theta^0) u_0>_{\theta^0} = 0.$
For such a particular $u_0$, $( E_0(\theta)-E_0(\theta^0)) || u_0 ||_{\theta^0}^2 = || \nabla u_0 ||_{\theta^0}^2$ and we have
\begin{equation}\label{eq_min3}
 E_0(\theta)-E_0(\theta^0) = \inf \left\{ \frac{||\nabla u ||_{\theta^0}^2}{||u||_{\theta^0}^2} \ ;\ u \in D(\tilde{H}_{\theta}) \right\}.
\end{equation}
Now we choose an other particular $u_1$ in $\mathcal{H}_{\theta^0}$ :  $u_1=(0,\ldots,0,\ee^{\ii (\theta -\theta^0)\cdot},0\ldots,0)$. Then,
$$||\nabla u_1 ||_{\theta^0}^2 = ||(\theta-\theta^0)(0,\ldots,0,\ee^{\ii (\theta -\theta^0)\cdot},0\ldots,0)||_{\theta^0}^2 = |\theta - \theta^0 |^2 ||u_1||_{\theta^0} .$$
Setting this $u_1$ in (\ref{eq_min3}), one gets
\begin{equation}\label{eq_min4}
E_0(\theta)-E_0(\theta^0) \leq \frac{||\nabla u_1 ||_{\theta^0}^2}{||u_1 ||_{\theta^0}^2} = |\theta - \theta^0|^2.
\end{equation}
To prove the lower bound in (\ref{eq_min}), let $u_2 \in D(\tilde{H}_{\theta})$ be such that
\begin{equation}\label{eq_min5}
  \frac{||\nabla u_2 ||_{\theta^0}^2}{||u_2||_{\theta^0}^2} =  E_0(\theta)-E_0(\theta^0).
\end{equation}
The vector $u_2$ is a vector where $u\mapsto <u, \tilde{H}_{\theta} u >_{\theta^0}$ is minimal. Since $\{ H_{\theta} \}_{\theta \in \T^*}$ is an analytic family of operators, one can use its Taylor expansion near $\theta^0$ to write
$$\exists \delta >0,\ \exists C>0,\ \forall \theta \in \T^*, |\theta - \theta^0| < \delta,\  \int_{C_0} <  (\mathcal{S}(x,\theta^0)u_2)(x),(H_{\theta} - E_0 (\theta^0)) (\mathcal{S}(x,\theta^0)u_2)(x) >_{\C^{\mathrm{D}}} \dd x $$
\begin{eqnarray}
\ & \geq & \int_{C_0} <  (\mathcal{S}(x,\theta^0)u_2)(x),(H_{\theta^0} - E_0 (\theta^0)) (\mathcal{S}(x,\theta^0)u_2)(x) >_{\C^{\mathrm{D}}}\dd x \label{eq_min6} \\
 &\ & + (\theta-\theta^0) \sum_{j=1}^d \frac{\partial}{\partial \theta_j} \left.  \int_{C_0} <  (\mathcal{S}(x,\theta^0)u_2)(x),H_{\theta} (\mathcal{S}(x,\theta^0)u_2)(x) >_{\C^{\mathrm{D}}} \dd x \right|_{\theta=\theta^0} \label{eq_min7} \\
 &\ & + C|\theta-\theta^0 |^2 \int_{C_0}<(\mathcal{S}(x,\theta^0)u_2)(x),(\mathcal{S}(x,\theta^0)u_2)(x) >_{\C^{\mathrm{D}}} \dd x \label{eq_min8} \\
 & \geq & C |\theta-\theta^0 |^2 || u_2 ||_{\theta^0}^2.
\end{eqnarray}
Indeed, the integral in (\ref{eq_min6}) is equal to zero by definition of $\mathcal{S}(\cdot,\theta^0)$. Moreover, the derivatives in (\ref{eq_min7}) vanish since $\theta^0$ is where $\theta \mapsto E_0(\theta)$ is minimal and $u_2$ verifies (\ref{eq_min5}). Finally, for the inequality and the existence of $C>0$ in (\ref{eq_min8}), one uses a compactness argument as in the proof of \cite[Proposition 1.1]{K99}. So we have
\begin{equation}\label{eq_min9}
 \exists \delta >0,\ \exists C>0,\ \forall \theta \in \T^*, |\theta - \theta^0| < \delta,\ \frac{<u_2,\tilde{H}_{\theta} u_2>_{\theta^0}}{||u_2||_{\theta^0}^2} \geq C  |\theta-\theta^0 |^2.
\end{equation}
Using (\ref{eq_min2}), (\ref{eq_min4}), (\ref{eq_min5}) and (\ref{eq_min9}), one finally gets :
{\small $$ \exists \delta >0,\ \exists C>0,\ \forall \theta \in \T^*,|\theta - \theta^0| < \delta,\ \ C|\theta-\theta^0 |^2 \leq \frac{<u_2,\tilde{H}_{\theta} u_2>_{\theta^0}}{||u_2||_{\theta^0}^2} =  \frac{||\nabla u_2 ||_{\theta^0}^2}{||u_2||_{\theta^0}^2} =  E_0(\theta)-E_0(\theta^0) \leq |\theta-\theta^0 |^2 ,$$ }
which achieves the proof.
\end{proof}

\subsection{Wannier basis} We recall concepts used in  \cite{K99,N03}. Let ${{\mathcal{E}}}\subset L^2(\R^d)\otimes \C^{\DD}$ be a closed subspace, invariant by $\Z^d$-translations, \emph{i.e.}, for every $n\in \Z^d,\ \Pi^{{\mathcal{E}}}=\tau_n^{*}\Pi^{{\mathcal{E}}}\tau_n$, where $\Pi^{{\mathcal{E}}}$ is the orthogonal projection on $\mathcal{E}$.

\noindent As $\Pi^{{\mathcal{E}}}$ is $\Z^d$-periodic, it admits a Floquet decomposition similar to the one of $H$ and, using the orthogonality, one gets:
$$\Pi^{{\mathcal{E}}}=\int_{\T^*}^{\oplus} \Pi_{\theta}^{{\mathcal{E}}} \; \dd \theta,$$
where $\Pi_{\theta}^{{\mathcal{E}}}$ is the operator $\Pi^{{\mathcal{E}}}$ acting on $\mathcal{H}_{\theta}$. The operator $\Pi_{\theta}^{{\mathcal{E}}}$ is therefore an orthogonal projection acting on $L^2(C_0)\otimes \C^{\DD}$. As for $(H_{\theta})_{\theta \in \T^*}$, the family $(\Pi_{\theta}^{{\mathcal{E}}})_{\theta \in \T^*}$ is continuous in $\theta$ and thus is of constant rank. If we fix $\theta \in \T^*$, we can find an orthonormal system $(w_{m,0})_{m\in M}$, with $M\subset \N$ a set of indices independent of $\theta$, that spans the range of $\Pi_{\theta}^{{\mathcal{E}}}$. Taking the image by $U^*$ of this orthonormal system, one gets an orthonormal system $(\widetilde{w}_{m,0})_{m\in M}$. If we set, for $n\in \Z^d$, $\widetilde{w}_{m,n}=\tau_n(\widetilde{w}_{m,0})$, then $(\widetilde{w}_{m,n})_{(m,n)\in M \times \Z^d}$ is an orthonormal basis of $\mathcal{E}$. Such a system is called a \emph{Wannier basis} of $\mathcal{E}$. The vectors $(\widetilde{w}_{m,0})_{m\in M}$ are called the \emph{Wannier generators} of $\mathcal{E}$.
\vskip 2mm

\noindent Let $\mathcal{E} \subset L^2(\R^d)\otimes \C^{\DD}$ be a space which is invariant by $\Z^d$-translations. The closed subspace $\mathcal{E}$ is said to be \emph{of finite energy for $H$} if $\Pi^{\mathcal{E}}H \Pi^{\mathcal{E}}$ is a bounded operator. In this case, $\mathcal{E}$ admits a finite set of Wannier generators. We now assume that $\mathcal{E}$ is of finite energy for $H$.
\vskip 2mm

\noindent Let $J_{0}$ be the set of indices of the Floquet eigenvalues of $H$ which take the value $0$ for some values of $\theta \in \T^*$. We identify $J_{0}$ to $\{1,...,n_{0}\}$. Let $Z$ be the set of
$\theta \in \T^*$ for which there exists $j\in J_{0}$ such that, $E_j(\theta)=0$. When $\theta^0$ is a nondegenerate minimum of $E_{j}$, $Z$ is a set of isolated points (see \cite{K99}). It occurs when the density of states $\mathfrak{n}$ has a nondegenerate behavior at $0$ (see \cite{K78}). For $j \in J_{0}$, we define $Z_j=\{\theta\in \T^*\ ;\ E_j(\theta)=0\}$. The sequence $(Z_j)_{j\in J_{0}}$ is decreasing for the inclusion and $Z_1=Z$. For $\theta^0\in Z$, $M_{\theta ^0} \subset \N$ is the set of indices such that $E_j(\theta^0)=0$.

\noindent We will denote by $w_j(\cdot,\theta)$ a Floquet eigenvector associated with the Floquet eigenvalue $E_j(\theta)$ of $H$. For $(\theta, \theta^{\prime})\in (\T^*)^2$, we define $T_{\theta \to \theta^{\prime}}:\mathcal{H}_{\theta}\to {\mathcal{H}}_{\theta^{\prime}}$ by:
$$\forall v\in \mathcal{H}_{\theta},\ \forall x\in \R^d,\ (T_{\theta \to \theta^{\prime}}v)(x)=\ee^{\ii x\cdot(\theta^{\prime}-\theta)}v(x).$$

\begin{lem}\label{lemMM2} There exists $(v_j(\cdot, \theta))_{j\in J_{0}}$, a family of functions on ${\mathcal{H}}_{\theta}$, such that:

\noindent 1) for $\theta^0\in Z$ and $j\in M_{\theta^0}$, there exists $V_{\theta^0}$ a neighborhood of $\theta^0$ in $\T^*$ such that the map $\theta \in V_{\theta^0} \mapsto v_j(\cdot, \theta)\in \mathcal{H}_{\theta}$ is real analytic (i.e. $\theta \mapsto T_{\theta \to \theta^0}v_j(\cdot, \theta)$ is analytic as a function from $V_{\theta ^0}$ to $\mathcal{H}_{\theta^{0}}$) and, for $\theta \in V_{\theta^0}$, $\mathrm{span}\langle (v_j(\cdot, \theta))_{j\in M_{\theta^0}}\rangle = \mathrm{span}\langle (w_j(\cdot, \theta))_{j\in M_{\theta^0}}\rangle$.

\noindent 2) For $\theta \in \T^*$, the system $(v_j(\cdot, \theta))_{j\in M_{\theta^0}}$ is orthonormal in $\mathcal{H}_{\theta}$ and $\mathrm{span}\langle (w_{j}(\cdot,\theta))_{j\in J_{0}}\rangle=\mathrm{span}\langle (v_{j}(\cdot,\theta))_{j\in J_{0}}\rangle$.
\end{lem}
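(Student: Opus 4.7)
The strategy is to reduce, locally near each $\theta^0 \in Z$, to a standard application of Kato's analytic perturbation theory applied to the isolated group of Floquet eigenvalues that vanish at $\theta^0$, and then glue the local constructions together. Fix $\theta^0 \in Z$. Since $\{H_\theta\}_{\theta \in \T^*}$ is an analytic family with compact resolvent and its eigenvalues accumulate only at $+\infty$, I can choose $\rho>0$ small enough that the circle $\Gamma=\{z\in\C : |z|=\rho\}$ separates the eigenvalues $\{E_j(\theta^0)\}_{j\in M_{\theta^0}}$ (all equal to $0$) from the rest of $\sigma(H_{\theta^0})$. By continuity of the Floquet eigenvalues, there exists a neighborhood $V_{\theta^0}\subset\T^*$ of $\theta^0$ on which $\Gamma$ still separates $\{E_j(\theta)\}_{j\in M_{\theta^0}}$ from the remaining spectrum of $H_\theta$. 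Conjugating by the unitary $T_{\theta\to\theta^0}$, which transports $\mathcal{H}_\theta$ analytically onto the fixed fiber $\mathcal{H}_{\theta^0}$, we obtain an analytic family $\widetilde H_\theta = T_{\theta\to\theta^0}\,H_\theta\, T_{\theta\to\theta^0}^{-1}$ on $\mathcal{H}_{\theta^0}$, and the Riesz projection
\begin{equation*}
\Pi(\theta) \;=\; \frac{1}{2\pi\ii}\oint_{\Gamma} (z-\widetilde H_\theta)^{-1}\,\dd z
\end{equation*}
is an analytic family of finite-rank orthogonal projections on $\mathcal{H}_{\theta^0}$, whose range equals $T_{\theta\to\theta^0}\bigl(\mathrm{span}\langle w_j(\cdot,\theta)\,:\,j\in M_{\theta^0}\rangle\bigr)$.

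To produce an orthonormal basis of $\mathrm{Ran}\,\Pi(\theta)$ that depends analytically on $\theta$, I invoke Kato's intertwining operator. Shrinking $V_{\theta^0}$ so that $\|\Pi(\theta)-\Pi(\theta^0)\|<1$ on $V_{\theta^0}$, the operator
\begin{equation*}
R(\theta) \;=\; \bigl(\mathbf{1}-(\Pi(\theta)-\Pi(\theta^0))^2\bigr)^{-1/2}\,\bigl[\Pi(\theta)\Pi(\theta^0)+(\mathbf{1}-\Pi(\theta))(\mathbf{1}-\Pi(\theta^0))\bigr]
\end{equation*}
is unitary on $\mathcal{H}_{\theta^0}$, analytic in $\theta$, and satisfies $R(\theta)\,\Pi(\theta^0)=\Pi(\theta)\,R(\theta)$. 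Choosing an orthonormal basis $(v_j^0)_{j\in M_{\theta^0}}$ of $\mathrm{Ran}\,\Pi(\theta^0)=\mathrm{span}\langle w_j(\cdot,\theta^0)\,:\,j\in M_{\theta^0}\rangle$, I set
\begin{equation*}
v_j(\cdot,\theta) \;=\; T_{\theta^0\to\theta}\bigl(R(\theta)\,v_j^0\bigr), \qquad j\in M_{\theta^0},\ \theta\in V_{\theta^0}.
\end{equation*}
By construction, $(v_j(\cdot,\theta))_{j\in M_{\theta^0}}$ is orthonormal in $\mathcal{H}_\theta$, spans the same subspace as $(w_j(\cdot,\theta))_{j\in M_{\theta^0}}$, and $T_{\theta\to\theta^0}v_j(\cdot,\theta)=R(\theta)v_j^0$ is analytic from $V_{\theta^0}$ to $\mathcal{H}_{\theta^0}$, which is property 1).

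For the global statement 2), I first observe that $Z$ is finite: by Proposition \ref{pro2.2} and its analogue at every local minimum of an $E_j$, $j\in J_{0}$, reaching the value $0$, each zero is isolated, and $\T^*$ is compact. I then pick pairwise disjoint neighborhoods $(V_{\theta^0})_{\theta^0\in Z}$ and perform the local construction on each. Outside $\bigcup_{\theta^0\in Z}V_{\theta^0}$, the eigenvalues $E_j(\theta)$ with $j\in J_0$ stay away from $0$ and, more importantly, any crossings among them do not interact with the ``bad'' point $0$; a plain Gram--Schmidt procedure applied to $(w_j(\cdot,\theta))_{j\in J_{0}}$ produces an orthonormal family with the required span equality. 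The main obstacle is the standard one in analytic perturbation theory: at a point $\theta^0$ where several Floquet eigenvalues meet at $0$, the individual eigenvectors $w_j(\cdot,\theta)$ generically cannot be chosen to extend analytically through $\theta^0$, and one is forced to work with the whole spectral subspace via the Riesz projection and the Kato intertwiner $R(\theta)$; transferring this standard construction to the matrix-valued, varying-fiber setting $\{\mathcal{H}_\theta\}$ is handled cleanly by the analytic transport $T_{\theta\to\theta^0}$.
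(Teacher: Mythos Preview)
Your argument is correct and is precisely the standard Kato--Sz.-Nagy construction (Riesz projection onto the isolated eigenvalue group, plus the analytic intertwining unitary) that underlies the references \cite{K99,N03} to which the paper defers for its proof; the only extra bookkeeping in the matrix-valued setting is the transport $T_{\theta\to\theta^0}$ to a fixed fiber, which you handle correctly. One small omission: inside each $V_{\theta^0}$ you should also say what $v_j(\cdot,\theta)$ is for $j\in J_0\setminus M_{\theta^0}$ so that the global span identity in 2) holds there --- but since those $w_j(\cdot,\theta)$ correspond to Floquet eigenvalues bounded away from $0$ on $V_{\theta^0}$ and are already orthogonal to $\mathrm{Ran}\,\Pi(\theta)$, simply taking $v_j=w_j$ (or Gram--Schmidt if needed) completes the family.
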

\begin{proof}
We refer to \cite{K99,N03}, Lemma $3.1$.
\end{proof}
\vskip 3mm

\noindent In the next section, we will use this notion of Wannier basis and the notations we have just introduce to reduce our problem on estimating $N(E)-N(0^+)$ to a discrete problem.

\section{Reduction of the problem}\label{sec_loc_energy}
The goal of this section is to give an estimate of $N(E)-N(0^+)$ for an energy $E$ close to $0$. This will be accomplished by means of the IDS of certain reference operators, which are discrete operators. In this section we will use the notations introduced in Section \ref{sec_floquet}.

\subsection{Reduction to a discrete problem}\label{sec_disc_pb}
The reduction procedure consists into decomposing the operator $\Hom$ according to various translation-invariant subspaces. The random operators thus obtained are what we consider as reference operators. They will be used to prove the upper bound on the IDS.

\noindent We denote by $\Pi_{0}(\theta)$ the orthogonal projection in $\mathcal{H}_{\theta}$ on the vector space generated by $(w_j(\cdot,\theta))_{j\in J_{0}}$.
\noindent One defines
\begin{equation}\label{eq_def_pi_alpha}
\Pi_{0} = U^{-1}\Big( \int_{\T^*}\Pi_{0}(\theta) \dd \theta \Big) U\ \ :\ \ \ L^2(\R^d)\otimes \C^{\DD} \to L^2(\R^d)\otimes \C^{\DD}.
\end{equation}
$\Pi_{0}$ is an orthogonal projection on $L^2(\R^d)\otimes \C^{\DD}$ and, for every $n \in \Z^d$, we have $\tau_{n}^{*} \Pi_{0}\tau_{n}=\Pi_{0}$. Thus, $\Pi_{0}$ is $\Z^d$-periodic. We set $\mathcal{E}_{0} = \Pi_{0}(L^2(\R^d)\otimes\C^{\DD})$. This space is translation-invariant because of the $\Z^d$-periodicity of $\Pi_{0}$. Moreover ${\mathcal{E}}_{0}$ is of finite energies for $H$ as defined in (\ref{eq_model_H_not}). The main result justifying this reduction procedure is the following theorem which compares $E\mapsto N(E)$, the IDS of $\Hom$, to $E\mapsto N_{{\mathcal{E}}_{0}}$, the IDS of the discretize operator $\Hdisc = \Pi_{0} \Hom \Pi_{0}$.

\begin{theo}\label{thm_bb22}
Let $\Hom$ be defined by (\ref{eq_model_H}) with the assumptions $(H1)$, $(H2)$ and $(H3)$. There exist $\varepsilon>0$ and $C>1$ such that, for $0\leq E\leq \varepsilon$ we have
\begin{equation}\label{eq_thm_bb22}
0\leq N(E)-N(0^+)\leq N_{{\mathcal{E}_{0}}}(C\cdot E),
\end{equation}
where $N_{\mathcal{E}_{0}}$ is the IDS of the discretized operator $\Hdisc =\Pi_{0} \Hom \Pi_{0}$.
\end{theo}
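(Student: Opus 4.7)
My plan is to prove the upper bound via an operator inequality of the form $H_\omega \geq c\,\Pi_{0} H_\omega \Pi_{0} + \delta'(I-\Pi_{0})$ with strictly positive constants $c,\delta'$, then translate it into an eigenvalue-counting inequality on Dirichlet boxes via the min-max principle; the lower bound $0\leq N(E)-N(0^+)$ is immediate from monotonicity of $N$. The deterministic input is that $\Pi_{0}$, built fiberwise in (\ref{eq_def_pi_alpha}) from the Floquet projectors for the indices $J_{0}$, satisfies $[H,\Pi_{0}]=0$, so $H=\Pi_{0} H\Pi_{0}+(I-\Pi_{0})H(I-\Pi_{0})$. The continuous Floquet eigenvalues $\theta\mapsto E_j(\theta)$ for $j\notin J_{0}$ never attain $0$ on the compact torus $\T^*$ by definition of $J_{0}$ and so, by continuity and compactness, are uniformly bounded below by some $\delta>0$; hence $(I-\Pi_{0})H(I-\Pi_{0})\geq\delta(I-\Pi_{0})$.

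For the random part, (H3) allows assuming $V_\omega\geq 0$ after shifting $W$, and (H2)--(H3) yield a uniform bound $V_\omega\leq M\cdot I$. For any $\eta\in(0,1)$, the operator $B_\eta=\sqrt\eta\,\Pi_{0}+\eta^{-1/2}(I-\Pi_{0})$ is self-adjoint, so $B_\eta V_\omega B_\eta\geq 0$. Expanding this inequality yields
$$V_\omega\;\geq\;(1-\eta)\Pi_{0} V_\omega\Pi_{0}+(1-\eta^{-1})(I-\Pi_{0})V_\omega(I-\Pi_{0})\;\geq\;(1-\eta)\Pi_{0} V_\omega\Pi_{0}-(\eta^{-1}-1)M\,(I-\Pi_{0}).$$
Summing with the deterministic gap and picking $\eta$ close enough to $1$ so that $\delta':=\delta-(\eta^{-1}-1)M>0$, together with $\Pi_{0} H\Pi_{0}\geq 0$ (which follows from $\inf\sigma(H)=0$, itself a consequence of $\inf\Sigma=0$ and $V_\omega\geq 0$), gives $H_\omega\geq c\,\Pi_{0} H_\omega\Pi_{0}+\delta'(I-\Pi_{0})$ with $c=1-\eta>0$.

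For $E<\delta'/2$ and any $\phi$ with $\langle\phi,H_\omega\phi\rangle\leq E\|\phi\|^2$, this operator inequality gives $\|(I-\Pi_{0})\phi\|^2\leq(E/\delta')\|\phi\|^2\leq\|\phi\|^2/2$ and hence $\|\phi\|^2\leq 2\|\Pi_{0}\phi\|^2$, combined with $\langle\Pi_{0}\phi,\Pi_{0} H_\omega\Pi_{0}\Pi_{0}\phi\rangle\leq(E/c)\|\phi\|^2\leq(2E/c)\|\Pi_{0}\phi\|^2$. In particular $\Pi_{0}$ is injective on the span of Dirichlet eigenvectors of $H_{\omega,C_{L}}$ with eigenvalue $\leq E$, so by min-max on $\mathcal{E}_{0}$,
$$\#\bigl\{\lambda\in\sigma(H_{\omega,C_{L}}):\lambda\leq E\bigr\}\;\leq\;\#\bigl\{\mu\in\sigma(\Pi_{0} H_{\omega,C_{L}}\Pi_{0}|_{\mathcal{E}_{0}}):\mu\leq CE\bigr\}$$
with $C=2/c$. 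Dividing by $|C_{L}|$ and letting $L\to+\infty$ gives $N(E)\leq N_{\mathcal{E}_{0}}(CE)$, and since $N(0^+)\geq 0$ this implies $N(E)-N(0^+)\leq N_{\mathcal{E}_{0}}(CE)$.

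The main obstacle lies in making this finite-volume step rigorous: $\Pi_{0}$ does not commute with the Dirichlet cutoff on $C_{L}$, so the operator inequality above is not literal at finite $L$. One must replace $\Pi_{0}$ by a truncated projection $\Pi_{0}^{C_{L}}$ built from the Wannier basis of Section \ref{sec_floquet}, whose generators are sufficiently localized thanks to the analyticity of $\theta\mapsto E_0(\theta)$ near its nondegenerate minima (cf.~Proposition \ref{pro2.2}). The resulting boundary corrections are of surface order $O(L^{d-1})$ and disappear in the thermodynamic limit after dividing by $L^d$, so the bulk inequality survives. Additional care is needed to match the conventions on $N(0^+)$ with how the zero eigenvalue of the projected operator is counted on the right-hand side.
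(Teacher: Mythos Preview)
The paper does not prove this itself; it simply cites Theorem~4.1 of \cite{K99}. Your operator inequality $H_\omega \geq c\,\Pi_0 H_\omega \Pi_0 + \delta'(I-\Pi_0)$ is correct and is exactly the core of Klopp's argument: the spectral gap on $\mathrm{ran}(I-\Pi_0)$ coming from $E_j(\theta)>0$ for $j\notin J_0$, the decoupling via $B_\eta V_\omega B_\eta\geq 0$, and the choice of $\eta$ close to $1$ are the right ingredients, and your conclusion that $\Pi_0$ is nearly isometric on low-energy states is the right consequence.

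Where your sketch diverges from \cite{K99} is in the passage from the form inequality to the IDS inequality. You propose comparing Dirichlet eigenvalue counts and repairing the non-commutation of $\Pi_0$ with the box restriction via a truncated Wannier projection $\Pi_0^{C_L}$ plus surface corrections. This route is viable in principle, but it requires exponential decay of the Wannier generators, which in turn needs analyticity of $\theta\mapsto\Pi_0(\theta)$ over the whole torus; your appeal to Proposition~\ref{pro2.2} only gives local nondegeneracy of $E_0$ at its minima, not global regularity of the full spectral projector, so this is a genuine extra input you would have to supply. Klopp's actual route avoids this: since $\Pi_0$ is itself $\Z^d$-periodic, $\Pi_0 H_\omega \Pi_0$ is ergodic with its own IDS, and the form inequality transfers directly at the level of the periodic approximations (Section~\ref{sec_per_approx} here, Section~5 of \cite{K99}), where Floquet theory on $\T_k^*$ is exact and no boundary mismatch arises; one then passes to the limit via Theorem~\ref{thm_lim_ids_per_approx}. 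Your approach would work but buys a harder analytic fact than is needed.
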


\begin{proof}
See Theorem 4.1 in \cite{K99}.
\end{proof}

\subsection{Periodic approximations}\label{sec_per_approx}
In order to get bounds on the density of states of $\Pi_0 \Hom \Pi_0$, we will now define periodic approximations of the operator $\Hom$. For these approximations, we will be able to control the density of state near $0$ by comparing it to some reduced operators. Then by taking a limit on the density of state of the reduced operators, we can get bounds on the density of states of $\Pi_0 \Hom \Pi_0$ and thus on the density of states of $\Hom$ itself by using Theorem \ref{thm_bb22}.
Let $k$ an integer larger than $1$. We define the following periodic operator
\begin{equation}\label{eq_per_approx}
H_{\omega,k}=-\Delta_d \otimes I_{\DD} + W(x) + \sum_{n\in C_k \cap \Z^d}\  \sum_{\beta \in (2k+1)\Z} \left(\begin{smallmatrix}
\omega_1^{(n)} V_{1} (x-(n+\beta))&  & 0\\
 & \ddots &  \\
0 &  & \omega_{\DD}^{(n)}V_{\DD} (x-(n+\beta))
\end{smallmatrix}
\right).
\end{equation}

\noindent The operator $H_{\omega, k}$ is $(2k+1){\Z^d}$-periodic and essentially selfadjoint. It is an $H$-bound perturbation of $H$ with relative bound zero. Because of the $(2k+1){\Z^d}$-periodicity, we introduce the torus $\T_{k}^*=\R^d/(2(2k+1)\pi\Z^d)$. We also define $N_{\omega,k}$, the IDS of $H_{\omega,k}$ by
\begin{equation}\label{eq_def_ids_per_approx}
N_{\omega,k}(E)=\frac{1}{(2\pi)^d}\sum_{j\in \N}\int_{\{\theta \in \T_k^*,\ E_{\omega,k,j}(\theta)\leq E \}} \dd \theta.
\end{equation}
where $E_{\omega,k,j}$ is the $j$-th Floquet eigenvalue of the periodic operator $H_{\omega,k}$. Let $dN_{\omega, k}$ be the derivative of $N_{\omega,k}$, in the distribution sense. As $E\mapsto N_{\omega, k}(E)$ is an increasing function, $dN_{\omega,k}$ is a positive measure, it is the density of states of $H_{\omega,k}$. Then, by \cite{K99,RS4}, for every $\varphi\in C_{0}^{\infty}(\R)$, the distribution $dN_{\omega,k}$ verifies
\begin{equation}\label{eq_1sla}
\langle \varphi,dN_{\omega,k} \rangle  =  \frac{1}{(2\pi)^d}\int_{\theta \in\T_k^{*}} \mathrm{tr}_{\mathcal{H}_{\theta}} \big(\varphi(H_{\omega,k,\theta}) \big)\dd \theta, =  \frac{1}{\mathrm{vol}(C_{k})}\mathrm{tr}\Big(\mathbf{1}_{C_{k}} \varphi(H_{\omega,k})\mathbf{1}_{C_{k}} \Big),
\end{equation}
where $\mathrm{tr}(A)$ is the trace of a trace-class operator $A$. We index this trace by $\mathcal{H}_{\theta}$ if the trace is taken in $\mathcal{H}_{\theta}$ and here, the operator $\mathbf{1}_{C_{k}} \varphi(H_{\omega,k})$ is a trace-class operator. The proof of (\ref{eq_1sla}) is given in \cite[Proposition 5.1]{K99}.
\vskip 1mm
\noindent We want to take a limit on the density of states $dN_{\omega,k}$ of the periodic approximations in order to recover properties of the density of states of $\Hom$ from properties of $dN_{\omega,k}$. The following theorem ensure that it is possible.
\begin{theo}\label{thm_lim_ids_per_approx}
1) For any $\varphi\in C_{0}^{\infty}(\R)$ and for almost every $\omega\in \Omega$, we have
$$\lim_{k\rightarrow \infty}\langle \varphi,dN_{\omega,k}\rangle=\langle \varphi,dN\rangle.$$
2) For any $E \in \R$ a continuity point for $N$, we have $\displaystyle\lim_{k\rightarrow \infty} \E(N_{\omega,k}(E))=N(E)$.
\end{theo}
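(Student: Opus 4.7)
The plan is to prove part 1) via a Pastur--Shubin type comparison between $H_{\omega,k}$ and $\Hom$, and then to deduce part 2) by sandwiching $\mathbf{1}_{(-\infty,E]}$ between smooth compactly supported test functions.

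For part 1), using (\ref{eq_1sla}) I would first rewrite
\[
\langle \varphi, dN_{\omega,k}\rangle = \frac{1}{\mathrm{vol}(C_k)}\mathrm{tr}\bigl(\mathbf{1}_{C_k}\varphi(H_{\omega,k})\mathbf{1}_{C_k}\bigr).
\]
By the $\Z^d$-ergodicity of $\{\Hom\}_\omega$ together with the matrix-valued Feynman--Kac representation of the semigroup established in \cite{B08}, the analogous quantity $\frac{1}{\mathrm{vol}(C_k)}\mathrm{tr}\bigl(\mathbf{1}_{C_k}\varphi(\Hom)\mathbf{1}_{C_k}\bigr)$ converges almost surely to $\langle \varphi,dN\rangle$ as $k\to\infty$ (the matrix-valued analogue of the Pastur--Shubin formula, obtained via Birkhoff's ergodic theorem as in the proof of existence of the IDS in \cite{B08}). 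Part 1) will therefore follow once one proves that
\[
\frac{1}{\mathrm{vol}(C_k)}\Bigl|\mathrm{tr}\bigl(\mathbf{1}_{C_k}(\varphi(H_{\omega,k})-\varphi(\Hom))\mathbf{1}_{C_k}\bigr)\Bigr|\;\longrightarrow\;0.
\]

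To obtain this convergence, I would use the Helffer--Sj\"ostrand functional calculus to represent $\varphi(H)$ as an integral of resolvents against an almost-analytic extension $\tilde\varphi$ of $\varphi$, combined with the second resolvent identity
\[
(z-H_{\omega,k})^{-1}-(z-\Hom)^{-1}=(z-H_{\omega,k})^{-1}(V_{\omega,k}-V_\omega)(z-\Hom)^{-1},
\]
where $V_{\omega,k}$ denotes the periodized random potential appearing in (\ref{eq_per_approx}). By construction, $V_{\omega,k}$ and $V_\omega$ agree on the interior region $\{x\in C_k:d(x,\partial C_k)>1\}$, so $V_{\omega,k}-V_\omega$ is supported outside this region. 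A matrix-valued Combes--Thomas exponential decay estimate on the two resolvents, together with the rapid decay of $\bar\partial\tilde\varphi$ off the real axis, then yields a bound on the Schwartz kernel of $\mathbf{1}_{C_k}(\varphi(H_{\omega,k})-\varphi(\Hom))\mathbf{1}_{C_k}$ that decays exponentially in the distance from $x$ to $\partial C_k$. Integrating the diagonal over $C_k$ gives a trace of order $O(\mathrm{vol}(\partial C_k))=O(k^{d-1})$, which vanishes once divided by $\mathrm{vol}(C_k)=O(k^d)$.

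For part 2), the uniform boundedness of $W$, of $V_1,\ldots,V_{\DD}$ and of the random variables $\omega_i^{(n)}$ provides constants $E_-\leq E_+$ with $\sigma(H_{\omega,k})\cup\sigma(\Hom)\subset[E_-,E_+]$ for every $k$ and every admissible $\omega$. Fix a continuity point $E$ of $N$ and $\delta>0$, and choose $\varphi^\pm_\delta\in C_0^\infty(\R)$ satisfying $\varphi^-_\delta\leq\mathbf{1}_{(-\infty,E]}\leq\varphi^+_\delta$ on $[E_--1,E_++1]$, with $\varphi^-_\delta$ coinciding with $\mathbf{1}_{(-\infty,E]}$ outside $(E-\delta,E]$ and $\varphi^+_\delta$ coinciding with $\mathbf{1}_{(-\infty,E]}$ outside $(E,E+\delta)$. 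These choices yield
\[
\langle\varphi^-_\delta,dN_{\omega,k}\rangle\leq N_{\omega,k}(E)\leq\langle\varphi^+_\delta,dN_{\omega,k}\rangle.
\]
Taking expectation, invoking part 1) with dominated convergence (legal thanks to the uniform upper bound on $\langle\varphi^+_\delta,dN_{\omega,k}\rangle$ coming from (\ref{eq_1sla}) and the $L^2$-kernel bound of the semigroup), and then letting $\delta\to0$, one obtains
\[
N(E^-)\leq\liminf_{k\to\infty}\E\bigl(N_{\omega,k}(E)\bigr)\leq\limsup_{k\to\infty}\E\bigl(N_{\omega,k}(E)\bigr)\leq N(E^+);
\]
the continuity of $N$ at $E$ then closes the argument.

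The main obstacle is the matrix-valued Combes--Thomas estimate needed in the kernel decay step: one has to check that the standard exponential decay of the resolvent kernel persists in $L^2(\R^d)\otimes\C^{\DD}$ with bounded real symmetric matrix-valued potentials. The usual dilation-conjugation argument $H\mapsto e^{-\alpha\cdot x}He^{\alpha\cdot x}$ adapts directly since the scalar factor $e^{\alpha\cdot x}\otimes I_{\DD}$ commutes with the matrix-valued part of $H$, giving the operator-norm bound; converting it into a pointwise kernel bound then relies on the $L^2$-kernel property of the semigroup provided by the matrix-valued Feynman--Kac formula of \cite[Proposition 1]{B08}.
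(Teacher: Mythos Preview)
Your proposal is correct and follows the standard route. The paper itself provides no details whatsoever: its entire proof is a two-line reference to Theorem~5.1 of \cite{K99}, whose argument proceeds exactly as you sketch---trace-per-volume formula, resolvent comparison between $H_{\omega,k}$ and $\Hom$ exploiting that the potentials coincide on $C_k$, exponential off-diagonal decay of the resolvent to control the boundary contribution, and then a smooth sandwiching of $\mathbf{1}_{(-\infty,E]}$ for part~2). Your outline is thus considerably more explicit than what the paper offers, and the one point you flag as a potential obstacle (the matrix-valued Combes--Thomas estimate) is indeed routine: the conjugation by $e^{\alpha\cdot x}\otimes I_{\DD}$ affects only the Laplacian part and leaves the bounded symmetric potential unchanged, so the scalar proof carries over verbatim.
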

\begin{proof}
The result of Theorem \ref{thm_lim_ids_per_approx} is close to that of Theorem 5.1 of \cite{K99}. The proof is also similar and is based on functional analysis.
\end{proof}

\section{Proof of theorem \ref{thm_main}}\label{sec_main_thm}

\noindent We will proceed in two steps. First, we will prove a lower bound and then an upper bound.

\subsection{Lower bound}\label{sec_low_bd}
In this subsection we prove

\begin{theo}\label{LL1}
Let $H_{\omega}$,  be the operator defined by (\ref{eq_model_H}) with the assumptions $(H1)$, $(H2)$ and $(H3)$. We have
\begin{equation}\label{hifr}
\liminf_{E\longrightarrow 0^{+}} \frac{\log\Big|\log \Big ( N(E)-N(0^+)\Big)\Big|}{\log E}\geq - \frac{d}{2}.
\end{equation}
\end{theo}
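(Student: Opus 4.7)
The plan is to establish the lower bound (\ref{hifr}) by a classical trial-function / Lifshitz-event argument, adapted to the matrix-valued setting. First I would fix $\theta^0 \in \T^*$ achieving the minimum of $\theta \mapsto E_0(\theta)$; since $V_{\omega} \geq 0$ as a symmetric matrix (because of (H2) and because the support of each $\omega_i^{(n)}$ contains $0$) and $\sigma(H) \subset \Sigma$, the normalisation $\inf \Sigma = 0$ forces $E_0(\theta^0) = 0 = \inf \sigma(H)$. Let $\psi_0 \in \mathcal{H}_{\theta^0}$ be a normalised Floquet eigenfunction of $H$ at energy $0$; viewed as a $\C^{D}$-valued function on $\R^d$ it is $\theta^0$-quasi-periodic and, by elliptic regularity applied to $H_{\theta^0}$, uniformly bounded. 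For small $E > 0$, set $\ell = \lfloor c_0 E^{-1/2}\rfloor$ with $c_0 > 0$ small, pick a cutoff $\chi_\ell \in C_0^\infty(C_\ell,[0,1])$ equal to $1$ on $C_{\ell/2}$ with $\|\nabla \chi_\ell\|_\infty \leq c/\ell$, and define $\varphi_\ell(x) = \chi_\ell(x)\,\psi_0(x) \in H_0^1(C_\ell) \otimes \C^{D}$. A direct integration by parts using $H\psi_0 = 0$ in the distributional quasi-periodic sense (where, after expanding $H(\chi_\ell \psi_0)$ via Leibniz, the cross terms cancel because $\langle \varphi_\ell, H\varphi_\ell\rangle$ is real) yields
\[
\langle \varphi_\ell, H\varphi_\ell\rangle = \int_{C_\ell}|\nabla \chi_\ell|^2 |\psi_0|^2 \,\dd x \leq C_1 \ell^{d-2}, \qquad \|\varphi_\ell\|^2 \geq c_1 \ell^d,
\]
hence a Rayleigh quotient for $H$ on $C_\ell$ bounded by $C_2 \ell^{-2}$.

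Next I would introduce the Lifshitz event
\[
B_\ell(E) := \{\omega : \omega_i^{(n)} \leq E \text{ for all } n \in C_\ell \cap \Z^d \text{ and all } i \in \{1,\ldots,D\}\}.
\]
On $B_\ell(E)$, the diagonal random potential satisfies $0 \leq V_{\omega}(x) \leq C_3 E\, I_{D}$ uniformly on $C_\ell$, so $\langle \varphi_\ell, \Hom \varphi_\ell\rangle / \|\varphi_\ell\|^2 \leq C_4 E$, and the min-max principle provides at least one Dirichlet eigenvalue of $\Hom$ on $C_\ell$ in $[0, C_4 E]$. Hypothesis (H3) gives, for every $\alpha > 0$ and every small enough $E$, $\widetilde{\mathsf{P}}_i(\omega_i^{(0)} \leq E) \geq \exp(-E^{-\alpha})$, hence, by independence of the whole family $(\omega_i^{(n)})_{i,n}$,
\[
\PP(B_\ell(E)) \geq \exp\bigl(-D\,\#(C_\ell \cap \Z^d)\,E^{-\alpha}\bigr) \geq \exp\bigl(-C_5 E^{-d/2-\alpha}\bigr).
\]
Combining with the standard Dirichlet bracketing for the IDS, $N(E) \geq \E[N_{C_\ell}^D(\omega, E)]/|C_\ell| \geq \PP(B_\ell(E))/|C_\ell|$, together with continuity of $N$ at $0$ (so that $N(0^+) = N(0) = 0$) yields
\[
N(E) - N(0^+) \geq C_6 E^{d/2}\exp\bigl(-C_5 E^{-d/2-\alpha}\bigr).
\]
Taking $\log|\log(\cdot)|$ and dividing by $\log E < 0$ gives $\liminf_{E\to 0^+} \log|\log(N(E) - N(0^+))|/\log E \geq -(d/2 + \alpha)$, and sending $\alpha \to 0^+$ completes the proof.

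The main obstacle is the trial-function computation of the first paragraph in the matrix-valued setting: one has to integrate by parts against the matrix-valued $W$ using that $\psi_0$ is $\C^{D}$-valued and $\theta^0$-quasi-periodic, and verify the uniform bound on $|\psi_0|$ (which follows from the compactness of the resolvent of $H_{\theta^0}$ together with the $L^p$-assumption on $W$). A secondary technical point is the continuity of $N$ at $0$, which is needed to convert the eigenvalue in $[0, C_4 E]$ into a contribution to $N(E) - N(0^+)$ rather than to $N(0^+)$; should this continuity not be available a priori, one can trim $B_\ell(E)$ to $\{c'E \leq \omega_i^{(n)} \leq E\}$, forcing a strictly positive contribution from $V_\omega$ at a probabilistic cost that preserves the estimate on $\PP(B_\ell(E))$.
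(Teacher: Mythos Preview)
Your argument is correct and is, in fact, the classical Kirsch--Martinelli/Simon position-space approach: cut off a quasi-periodic ground state $\psi_0$ of the periodic operator $H$ on a box of side $\sim E^{-1/2}$, use the identity $\langle \chi_\ell\psi_0, H(\chi_\ell\psi_0)\rangle = \int |\nabla\chi_\ell|^2\,|\psi_0|^2$ (which follows from $H\psi_0=0$ and the computation you indicate), and then impose a Lifshitz event on all $D\cdot\#(C_\ell\cap\Z^d)\sim D\,E^{-d/2}$ random couplings. The bound $\|\psi_0\|_\infty<\infty$ that you need is indeed available: $W$ is bounded by (H1), so elliptic bootstrapping gives $\psi_0\in L^\infty_{\mathrm{loc}}$, and $|\psi_0|$ is $\Z^d$-periodic.

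The paper takes a genuinely different route. Instead of a position-space cutoff, it localises in the Floquet variable: it multiplies the analytic family of Floquet vectors $v_1(\cdot,\theta)$ by a bump $\mathcal W_\varepsilon(\theta)$ supported in a ball of radius $\sim\varepsilon^{1/2}$ around $\theta^0$, and then projects onto a box $\Lambda_\alpha(\varepsilon)$ in $\Z^d$ (the ``momentum-space'' cutoff). A second difference is that the paper keeps only a \emph{single} component of the $\C^D$-valued trial function, namely $f(\cdot,\theta)=\frac{f_1(\cdot,\theta)}{|\theta_1|}(1,0,\dots,0)$, so that $V_\omega$ acts on it through $\omega_1^{(n)}V_1$ alone and the Lifshitz event involves only the $\omega_1^{(n)}$'s; you instead control all $D$ families simultaneously. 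Both choices give the same exponent $-d/2-\alpha$ before sending $\alpha\to 0$: in the paper the extra $\alpha$ enters through the oversized box $\Lambda_\alpha(\varepsilon)$ of side $\varepsilon^{-(1/2+\alpha)}$, while in your argument it enters through (H3) applied to $\widetilde{\mathsf P}_i(\omega_i^{(0)}\le E)$. Your approach is more elementary and avoids the Floquet-analytic machinery of Lemma~\ref{lemMM2} and the nonstationary-phase Lemma~\ref{sed1}; the paper's approach, on the other hand, reuses the Floquet/Wannier framework that is genuinely needed for the upper bound, and its single-component reduction makes transparent that the factor $D$ plays no role. The technical point about $N(0^+)$ that you flag is handled in the paper simply by writing $N(\varepsilon)-N(0)=N(\varepsilon)-N(-\varepsilon)$, i.e.\ by using that $N$ vanishes below $0$ and has no atom at $0$; your acknowledgement of this issue is appropriate.
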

\vskip 3mm

\noindent \textrm{\textbf{{The proof of Theorem \ref{LL1}:}}} As $0$ is the bottom of the spectrum, for $\varepsilon >0$ we have $N(\varepsilon) - N(0)=N(\varepsilon) - N(-\varepsilon)$.
 To prove Theorem \ref{LL1}, we will lower bound $N(\varepsilon)-N(-\varepsilon)$. Then, for $L$ large, we will show that $H_{\omega,C_L}$ (we recall that $H_{\omega,C_L}$ is $H_{\omega}$
  restricted to $C_L$ with Dirichlet boundary conditions) has a large number of eigenvalues in $[-\varepsilon,\varepsilon]$ with a large probability. To do this we will construct a family
  of approximate eigenvectors associated to approximate eigenvalues of $H_{\omega, C_L}$ in $[-\varepsilon, \varepsilon]$. These functions will be constructed from an eigenvector of
  $-\Delta_d\otimes I_{\DD} + W(x)$ associated to $0$. Locating this eigenvector in $\theta $ and imposing to $\omega_1^{(n)}$ to be small for $n$ in some well chosen box, one obtains an
  approximate eigenfunction
of $H_{\omega, C_L}$. Locating the eigenfunction in $x$ in several disjointed places, we get several eigenfunctions two by two orthogonal.

\noindent In order to simplify the notations, we assume in what follows that $\theta^0=0$ is a point where $E_0(\theta)$ reaches $0$. From the same arguments as in \cite{N03} and using
Proposition \ref{pro2.2}, there exists $C>0$ such that, for $\tilde{f}(\cdot, \theta):=(f_1(\cdot,\theta),\cdots ,f_D(\cdot,\theta))=v_1(\cdot, \theta)$ in $L^2(\R^d)\otimes \C^{\DD}$, ($v_{1}$ is the vector constructed in Lemma \ref{lemMM2}) one has
\begin{equation}\label{42}
||(-\Delta_d\otimes I_{\DD} + W(x))\tilde{f}(\cdot,\theta)||_{L^2(C_{0})\otimes \C^{\DD}} \leq C |\theta|^2.
\end{equation}
This is due to the fact that locally near $\theta=0$, we can
reduce the study of $\displaystyle -\Delta_d\otimes I_D+W(x)$
which is analytic in $\theta$ and is equal to $0$ when $\theta=0$
and to the use of (\ref{eq_min}) and (\ref{eq_min1}).\newline
 We
assume, without loss of generality, that $f_1\neq 0$ and we set
\begin{equation}\label{mai2}
f(\cdot,\theta)=\frac{f_1(\cdot,\theta)}{|\theta_1|}(1,0,\cdots,0).
\end{equation}
Let $0<\xi<1$ be a small constant. Let $\chi \in C_{0}^{\infty}(\R)$ being positive, supported in $[\frac{\xi}{2}, \xi]$ and such that $\displaystyle\int_{[\frac{\xi}{2},\xi]}\chi(t)^2dt=2$.

\noindent For $\varepsilon > 0$, we define
\begin{equation}\label{mai3}
{\mathcal{W}}_{\varepsilon}(\theta)=\varepsilon
^{-d/4}\prod_{j=1}^{d}\chi(\varepsilon^{-\frac{1}{2}}\theta_j) \in L^2(\T^*)\quad \mbox{and}\quad {\mathcal{W}}_{\varepsilon}^{f}(\cdot,\theta)={\mathcal{W}}_{\varepsilon}(\theta)\cdot f(\cdot,\theta)\in L^2(C_{0})\otimes \C^{\DD}.
\end{equation}
Now let us estimate
$\displaystyle ||(-\Delta_d\otimes I_{\DD} + W(x)){\mathcal{W}}_{\varepsilon}^f||^2_{\mathcal{H}}$. We have
$$
||(-\Delta_d\otimes I_{\DD} + W(x)){\mathcal{W}}_{\varepsilon}^f||^2_{\mathcal{H}}=\frac{1}{\mbox{vol}(\T^*)} \int_{\T^*}||(-\Delta_d\otimes I_{\DD} + W(x))(\theta)f(\cdot,\theta)||^2_{L^2(C_{0})\otimes \mathbb{C}^D} | \mathcal{W}_{\varepsilon}(\theta)|^2 \dd \theta .
$$
So using (\ref{42}) and (\ref{mai3}), we get
\begin{equation}\label{*}
||(-\Delta_d\otimes I_{\DD} + W(x)){\mathcal{W}}_{\varepsilon}^f||_{{\mathcal{H}}}^2 \leq C^2 \int_{\T^*}|\theta|^4|\ \mathcal{W}_{\varepsilon}(\theta)|^2 \dd \theta \leq  C^2\varepsilon^2\int_{[\frac{\xi}{2},\xi]^d}|\theta|^4\prod_{j=1}^{d}\chi^2(\theta_j) \dd \theta \leq \frac{\varepsilon^2}{8},
\end{equation}
if $\xi$ is small enough. For $\beta \in \Z^d$, we define
$$
\mathcal{W}_{\varepsilon,\beta}^{f}(\cdot,\theta)=e^{-i\beta \cdot \theta}\mathcal{W}_{\varepsilon}^{f}(\cdot,\theta)\quad \mbox{and} \quad \mathcal{W}_{\alpha,\varepsilon,\beta,\zeta}^{f}(\cdot,\theta)=e^{-i\beta \cdot
\theta}(\Pi_{\Lambda_{\alpha}(\zeta)}{\mathcal{W}}_{\varepsilon}^{f})(\cdot,\theta),
$$
where $\Lambda_{\alpha}(\zeta)$ is the cube defined by
$$
\Lambda_{\alpha}(\zeta)=\left\{n \in {\Z^d}\ \Big| \ \mbox{for}\ 1\leq j\leq d, \ |n _{j}|\leq \zeta^{-(\frac{1}{2}+\alpha)}\right\}
$$
and $\Pi_{\Lambda_{\alpha}(\zeta)}$ is the orthogonal projection on
$\Lambda_{\alpha}(\zeta)$, i.e it is the operator of orthogonal
projection on $L^2(\mathbb{T}^*)$ on the space spanned by vectors
$\displaystyle \theta \to e^{i\gamma \theta},\gamma \in
\Lambda_\alpha (\zeta)$.

\noindent We set
$$
\mathcal{U}_{\varepsilon,\beta}^f(x)=\int_{\T^*}\mathcal{W}_{\varepsilon,\beta}^{f}(x,\theta) \dd \theta \quad \mbox{and}\quad \mathcal{U}_{\alpha,\varepsilon,\beta,\zeta}^f(x)=\int_{\T^*} \mathcal{W}_{\alpha,\varepsilon,\beta,\zeta}^{f}(x,\theta) \dd\theta.
$$
For $L$ large and $\beta$ and $(\omega_1^{(n)})_{n\in \Z^d}$ well chosen, $\mathcal{U}_{\alpha,\varepsilon,\beta,\zeta}^{f}$ will be an approximate eigenfunction of $H_{\omega,C_L}$ associated to an approximate eigenvalue in the interval ${[-\varepsilon,\varepsilon]}.$

\noindent We notice that $\displaystyle {\mathcal{U}}_{\alpha,\varepsilon,\beta,\zeta}^{f}\in L^2(\R^d)\otimes\C^{\DD}$ and $\displaystyle \mathcal{U}_{\alpha,\varepsilon,\beta,\zeta}^{f_1}\in {L^2(\R^d)}$. As in \cite{N03} one gets that
$$\big|\big|{\mathcal{U}}_{\alpha,\varepsilon,\beta,\zeta}^{f} \big|\big|_{L^2(\R^d)\otimes\C^{\DD}} \geq \big|\big|{\mathcal{U}}_{\alpha,\varepsilon,\beta,\zeta}^{f_1} \big|\big|_{L^2(\R^d)} > C > 0.$$
Now we have to look to the conditions under which we have
\begin{equation}\label{eq_low_bound_maj}
 \Big|\Big|\Big(-\Delta_d\otimes I_{\DD} + W(x)\Big){\mathcal{U}}_{\alpha,\varepsilon,\beta,\zeta}^f\Big|\Big|^2_{L^2(\R^d)\otimes \C^{\DD}}\leq \varepsilon^2.
\end{equation}

Note that
\begin{eqnarray}
\Big|\Big|H_{\omega,C_{L}} \mathcal{U}_{\alpha,\varepsilon,\beta,\zeta}^f\Big|\Big|_{L^2(\R^d)\otimes \C^{\DD}}^2 & \leq & \Big|\Big| H_\omega \cdot \mathcal{U}_{\alpha,\varepsilon,\beta,\zeta}^f\Big|\Big|_{L^2(\R^d)\otimes \C^{\DD}}^2 \nonumber \\
& \leq & 2\Big|\Big| \Big(-\Delta_d\otimes I_{\DD} + W(x)\Big){\mathcal{U}}_{\alpha,\varepsilon,\beta,\zeta}^f\Big|\Big|_{L^2(\R^d)\otimes \C^{\DD}}^2 + 2\Big|\Big|V_\omega \mathcal{U}_{\alpha,\varepsilon,\beta,\zeta}^f\Big|\Big|_{L^2(\R^d)\otimes \C^{\DD}}^2 .\label{koum}
\end{eqnarray}
Equation (\ref{eq_low_bound_maj}) give the bound on the first member of (\ref{koum}). It just remains to control the second term. To do so, one needs the following lemma

\begin{lem}\label{sed1}
Let $\zeta=\varepsilon$. There exists $K>0$, such that
\begin{equation}\label{49}
\Big|\Big|V_{\omega}\cdot
\mathcal{U}_{\alpha,\varepsilon,\beta,\varepsilon}^f
\Big|\Big|_{L^2(\R^d)\otimes \C^{\DD}}^2 \leq \varepsilon^4 + K
\cdot \Big( \sup_{n \in \beta +2\Lambda_{\alpha}(\varepsilon
)}\omega_1^{(n)} \Big)^2.
\end{equation}
\end{lem}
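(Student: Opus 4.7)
I would begin by exploiting the specific structure of $f$: since $f(\cdot,\theta) = \frac{f_1(\cdot,\theta)}{|\theta_1|}(1, 0, \ldots, 0)$ has only its first coordinate non-zero and $V_\omega(x)$ is block-diagonal, only the $(1,1)$-entry of $V_\omega$ acts non-trivially on $\mathcal{U}_{\alpha,\varepsilon,\beta,\varepsilon}^f$. Writing $g$ for the scalar first component of $\mathcal{U}_{\alpha,\varepsilon,\beta,\varepsilon}^f$ (obtained by applying the same Wannier-type construction to $f_1/|\theta_1|$), one obtains
$$
\big\|V_\omega\, \mathcal{U}_{\alpha,\varepsilon,\beta,\varepsilon}^f\big\|^2_{L^2(\R^d)\otimes\C^{\DD}} = \Big\| \sum_{n\in\Z^d} \omega_1^{(n)}\, V_1(x-n)\, g(x) \Big\|^2_{L^2(\R^d)}.
$$
Because the supports of the translates $V_1(\cdot - n)$ are pairwise disjoint up to Lebesgue-null sets (they sit inside the unit cubes $n + C_0$), expanding the square makes all cross terms vanish and yields
$$
\big\|V_\omega\, \mathcal{U}_{\alpha,\varepsilon,\beta,\varepsilon}^f\big\|^2 = \sum_{n\in\Z^d} (\omega_1^{(n)})^2 \int_{n+C_0} V_1(x-n)^2\, |g(x)|^2\, \dd x.
$$

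Next, I would split the sum over $n$ into a \emph{near} part ($n \in \beta + 2\Lambda_\alpha(\varepsilon)$) and a \emph{far} part ($n \notin \beta + 2\Lambda_\alpha(\varepsilon)$). In the near part, factoring out the supremum of $\omega_1^{(n)}$ over those indices and summing the local $L^2$-masses produces a bound of the form $K\bigl(\sup_{n \in \beta + 2\Lambda_\alpha(\varepsilon)} \omega_1^{(n)}\bigr)^2$, with $K$ controlled by $\|V_1\|_\infty^2$ and by the $L^2$-mass of $g$ over the box $B_\Lambda$ of side length $O(\varepsilon^{-1/2-\alpha})$ around $\beta$. This produces the second term in the lemma. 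For the far part, I would invoke the uniform boundedness of the random variables $\omega_1^{(n)}$ from Hypothesis $(H3)$, reducing the estimate to controlling $\int_{\R^d \setminus B_\Lambda} |g(x)|^2\, \dd x$ by a universal constant.

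The main technical obstacle, and the heart of the proof, is to show that this far-field integral is at most $\varepsilon^4$. This amounts to establishing super-polynomial decay of $g$ away from $\beta$ at the scale $\varepsilon^{-1/2-\alpha}$, and it is precisely here that the choice $\zeta = \varepsilon$ is matched to the construction of $\mathcal{W}_\varepsilon$. Since $\mathcal{W}_\varepsilon(\theta)$ is smooth and supported in a cube of side $\sqrt{\varepsilon}$, repeated integration by parts in $\theta$ produces Fourier coefficients satisfying $|\hat{\mathcal{W}}_\varepsilon(m)| \leq C_N (1 + |m|\sqrt{\varepsilon})^{-N}$ for every $N$. Combined with the truncation $\Pi_{\Lambda_\alpha(\varepsilon)}$, the shift $e^{-\ii \beta \cdot \theta}$, and the smoothness of $\theta \mapsto f_1(\cdot, \theta)/|\theta_1|$ on the support of $\mathcal{W}_\varepsilon$ (where $|\theta_1| \geq \sqrt{\varepsilon}\,\xi/2$), the Wannier-type representation of $g$ forces $|g(x)| \leq C_N (1 + |x-\beta|\sqrt{\varepsilon})^{-N}$ for $x$ outside $\beta + \Lambda_\alpha(\varepsilon)$. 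Taking $N$ large enough (in particular larger than a suitable multiple of $d(\tfrac{1}{2}+\alpha) + 2$) and integrating yields the bound $\varepsilon^4$, closing the argument. This is the matrix-valued adaptation of the scalar Wannier estimate of \cite{N03}, with the diagonal structure of $V_\omega$ absorbed into the scalar function $g$.
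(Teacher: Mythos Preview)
Your reduction to the scalar problem via the diagonal structure of $V_\omega$ and the special form of $f$ is exactly what the paper does. From that point on, however, the two arguments are organised differently.

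The paper does \emph{not} split the sum over $n$ into near and far pieces for the projected function $g=\mathcal{U}^{f_1}_{\alpha,\varepsilon,\beta,\varepsilon}$. Instead it first replaces $g$ by the \emph{unprojected} function $\mathcal{U}^{f_1}_{\varepsilon,\beta}$; the $\varepsilon^4$ (in fact $\varepsilon^5$) in the statement is precisely the $L^2$ error of that replacement, coming from the super-polynomial decay of the Fourier coefficients of the smooth cutoff $\mathcal{W}_\varepsilon$ beyond the truncation scale $\varepsilon^{-1/2-\alpha}$. Only then does it estimate the quantity $S_{\beta,\varepsilon}=\int\bigl(\sum_n\omega_1^{(n)}V_1(x-n)\bigr)^2|\mathcal{U}^{f_1}_{\varepsilon,\beta}(x)|^2\,\dd x$, rewriting it as a sum over $\eta\in\Z^d$ of terms $\bigl(\sum_{n\in\eta+\Lambda_\alpha(\varepsilon)}\omega_1^{(n)}\bigr)^2\int_{C_0}|\mathcal{U}^{f_1}_{\varepsilon,\eta-\beta}|^2$ and invoking non-stationary phase (as in \cite{K99,N03}) to get rapid decay in $|\eta-\beta|$, which yields the $K(\sup\omega_1^{(n)})^2$ term.

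Your route---working directly with $g$ and performing a near/far split in $n$---is actually more direct, but your justification of the far piece is more elaborate than needed. The projection $\Pi_{\Lambda_\alpha(\varepsilon)}$ onto the span of $\{e^{\ii\gamma\theta}:\gamma\in\Lambda_\alpha(\varepsilon)\}$, followed by the phase shift $e^{-\ii\beta\theta}$ and the inverse Floquet map $U^*$, forces $g$ to be \emph{compactly supported} in $\bigcup_{m\in\beta+\Lambda_\alpha(\varepsilon)}(m+C_0)$. Hence your ``far part'' (over $n\notin\beta+2\Lambda_\alpha(\varepsilon)$) vanishes identically, and no Fourier-decay argument is required there; the Wannier-type stationary-phase estimate you sketch is really the tool for the \emph{unprojected} object $\mathcal{U}^{f_1}_{\varepsilon,\beta}$, which is how the paper uses it. Once this is recognised, your near part gives the bound $K(\sup_{n\in\beta+2\Lambda_\alpha(\varepsilon)}\omega_1^{(n)})^2$ directly, which is even stronger than the stated lemma.
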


\noindent Before proving this lemma let us use it to finish the proof of Theorem \ref{LL1}.
\vskip 2mm

\noindent Taking (\ref{eq_low_bound_maj}) and (\ref{koum}) into account, we get that there exists $K>0$ such that
\begin{equation}
\Big|\Big|H_{\omega}\cdot \mathcal{U}_{\alpha,\varepsilon,\beta,
\varepsilon }^f\Big|\Big|^2 \leq 4\varepsilon^2+K\Big( \sup_{n \in
\beta
+2\Lambda_{\alpha}(\varepsilon)}\omega_1^{(n)}\Big)^2.\label{53}
\end{equation}
Now, for $L$ large, we may divide $C_L$ into $L(\varepsilon)$ disjoints cubes of size $2\Lambda_{\alpha}(\varepsilon)$. For $\alpha<\frac{1}{2}$, there exists $C>0$ such that $L(\varepsilon)$ satisfies
\begin{equation}
 L(\varepsilon)\simeq\frac{(2L)^d}{\varepsilon^{-d(\frac{1}{2}+\alpha)}} \geq\frac{(L\varepsilon )^d}{C} \label{53b}.
\end{equation}
We can find $\beta_1,\ldots, \beta_{L(\varepsilon)}$ in $\Z^d$ such that :
$$\bigcup _{j=1}^{L(\varepsilon)} (\beta_j+ 2\Lambda_{\alpha}(\varepsilon)) \subset C_L \quad \mbox{and, for}\ j\neq j',\quad (\beta _j+2\Lambda_{\alpha}(\varepsilon  ))\cap (\beta _{j'}+2\Lambda_{\alpha}(\varepsilon))=\emptyset.$$
In particular, for  $j\neq j',$ $\mathcal{U}_{\alpha,\varepsilon,\beta_j,\varepsilon}^f$ and $\mathcal{U}_{\alpha,\varepsilon, \beta_{j'}, \varepsilon}^f$ are orthogonal. Then,
\begin{eqnarray}\label{54}
\E\Big(\# \Big\{\mbox{eigenvalues of}\ \Pi _{C_L} H_{\omega}\Pi_{C_L} \mbox{ in } [-\varepsilon,\varepsilon]\Big\}\Big) &
\geq & \E \Big(\#\Big\{j\in \{1,\ldots, L(\varepsilon)\}\ \Big|\ ||H_\omega \mathcal{U}_{\alpha,\varepsilon,\beta_{j},\varepsilon}^f||_{L^2(\R^d)\otimes \C^{\DD}}\leq \varepsilon\Big\}\Big) \nonumber \\
&  \geq & \E \left(\sum_{j=1}^{L(\varepsilon)}B_j(\omega) \right),
\end{eqnarray}
where
$$
B_j (\omega)=\left\lbrace \begin{array}{ccl}
 1\quad & \mbox{ if } K \cdot \left(\sup_{n \in \beta_j +2\Lambda_{\alpha}(\varepsilon)} \omega_{1}^{(n)}\right)^2 \leq \frac{\varepsilon^2}{2}.\\[3mm]
 0\quad & \mbox{ if not. }
\end{array}\right.$$
The $(B_j)_{1\leq j \leq L(\varepsilon)}$ are \emph{i.i.d.} Bernoulli random variables. So equations (\ref{54}) and (\ref{53b}) imply that there exists $C>0$ such that one has
\begin{equation*}
\frac{1}{(2L+1)^d}\E \Big(\# \Big\{ \mbox{ eigenvalues of }\Pi_{C_L}H_{\omega}\Pi_{C_L} \mbox{ in } [-\varepsilon, \varepsilon]\Big\}\Big) \geq  \frac{L(\varepsilon)}{(2L+1)^d} \mathsf{P}(B_1=1) \geq \frac{1}{C} \varepsilon^d \mathsf{P}(B_1=1).
\end{equation*}
Hence, taking the limit $L\to \infty$, we get that, for $\varepsilon >0$ small,
\begin{equation}\label{55}
N(\varepsilon)- N(-\varepsilon) \geq \frac{1}{C} \varepsilon^{d} \mathsf{P}(B_1=1).
\end{equation}
It just remains to estimate $\mathsf{P}(B_1=1)$. If, for $1\leq
j\leq L(\varepsilon)$ and $n\in
\beta_j+2\Lambda_{\alpha}(\varepsilon)$, one has $\displaystyle
\omega_{1}^{(n)} \leq \frac{\varepsilon\cdot \varepsilon }{2 K}$,
then for $\varepsilon$ rather small
$$K \Big( \sup_{n \in \beta_j +2\Lambda_{\alpha}(\varepsilon )}\omega_{1}^{(n)} \Big)^2\leq \frac {\varepsilon^2}{2}.$$
As the random variables are \emph{i.i.d.}, one has the estimate
 $$\mathsf{P}(B_1=1)\geq \tilde{\mathsf{P}}_1\Big(\omega_{1}^{(0)}\leq \frac{\varepsilon}{2 K }\Big)^{{\# (2\Lambda_{\alpha}(\varepsilon)})}.$$
 Hence, taking the double logarithm of (\ref{55}), using assumption (H3) and the fact that $\#(2\Lambda_{\alpha}(\varepsilon))\simeq \varepsilon^{-(\frac{d}{2}+d\cdot\alpha)} $, we get that
\begin{equation}
\lim_{\varepsilon \to 0^{+}} \frac{\log\Big|\log \Big ( N(\varepsilon)-N(0)\Big)\Big|}{\log\varepsilon}\geq-
\frac{d}{2}-d\alpha.\label{sed3}
\end{equation}
The equation (\ref{sed3}) is  true for any $\alpha > 0$, by letting $\alpha$ tend to $0$, we end the proof of Theorem \ref{LL1}. $\Box$

\noindent It remains to prove Lemma \ref{sed1} to finish this section on the lower bound.
\vskip 3mm

\noindent {\bf{The proof of  Lemma \ref{sed1}.}} We have :
\begin{equation}
\Big|\Big|V_{\omega}\cdot\mathcal{U}_{\alpha,\varepsilon,\beta,
\varepsilon }^f\Big|\Big|^2_{L^2(C_0)\otimes \C^{\DD}} \lesssim
\Big|\Big|\sum_{n \in \Z^d}\omega_1^{(n)} V_1(x-n)
\mathcal{U}_{\alpha,\varepsilon,\beta,\varepsilon
}^{f_1}\Big|\Big|^2_{L^2(C_0)} . \label{naj}
\end{equation}
Then,
\begin{equation}\label{DIMDIM}
\Big|\Big|\sum_{n \in \Z^d}\omega_1^{(n)}
V_1(x-n)\mathcal{U}_{\alpha,\varepsilon,\beta, \varepsilon
}^{f_1}\Big|\Big|^2 \lesssim \varepsilon^5+ \int_{\R^d}
\Big(\sum_{n\in
\Z^d}\omega_1^{(n)}V_1(x-n)\Big)^2\Big|\mathcal{U}_{\varepsilon,\beta}^{f_1}(x)\Big|^2
\dd x.
\end{equation}
Here we used the fact that $\mathcal{U}_{\alpha,\varepsilon,\beta,
\varepsilon }^{f_1}$ and $\mathcal{U}_{\varepsilon,\beta}^{f_1}$ are
close to each other.  \newline We set
\begin{equation} \label{X**}
S_{\beta,\varepsilon} \leq K \sum_{\eta \in \Z^d} \left(\sum_{n \in
\eta +\Lambda_\alpha(\varepsilon)} \omega_1^{(n)} \right)^2\cdot
\int_{C_{0}}\Big|\mathcal{U}_{\varepsilon,\eta-\beta}^{f_1}(x)\Big|^2
\dd x.
\end{equation}So for our choice of $f_1$ and
using the fact that $V_1$ is supported in $C_0$, we deal with a with
a simple quantity to control using the non-stationary phase and
which was already estimated in \cite{K99, N03}.$\Box$

\noindent To finish the proof of Theorem \ref{thm_main}, it remains to prove the upper bound.

\subsection{Upper bound}\label{sec_up_bd}
We start this section by recalling that, as we deal with the bottom of the spectrum, we have non-degeneracy of the first Floquet eigenvalue at the bottom of the spectrum as shown in Proposition \ref{pro2.2}. Using this, we prove the following theorem:
\begin{theo}\label{LL12}
Let $H_{\omega}$ be the operator defined by (\ref{eq_model_H}) with the assumptions (H1), (H2) and (H3). Then
$$\limsup_{E \longrightarrow 0^{+}}\frac{\log | \log(N(E)-N(0^+))|}{\log E} \leq -\frac{d}{2}.$$
\end{theo}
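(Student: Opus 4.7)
The overall strategy is to adapt the Kirsch--Simon--Klopp approach to Lifshitz tails to the matrix-valued setting, leveraging the periodic approximations of Section \ref{sec_per_approx} and the Floquet framework of Section \ref{sec_floquet}. By part (2) of Theorem \ref{thm_lim_ids_per_approx}, for every $E > 0$ that is a continuity point of $N$ we have $N(E) - N(0^+) = \lim_{k \to \infty} \E(N_{\omega,k}(E))$. For each small $E$ I will couple the scale of the periodic approximation to the energy by taking $k = k(E) \sim \lfloor E^{-1/2} \rfloor$, which is the natural scale dictated by the nondegenerate quadratic minimum of $E_0(\theta)$ in Proposition \ref{pro2.2}, and aim to prove
\[
\E(N_{\omega,k(E)}(E)) \leq \exp\!\left(-c\, E^{-d/2+o(1)}\right), \quad E \to 0^+,
\]
where the $o(1)$ correction in the exponent is absorbed by assumption (\ref{eq_cond_va}).

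The first step is to pass from the trace formula to a single-eigenvalue event. Applied to a smooth cutoff of $\mathbf{1}_{(-\infty,E]}$, equation (\ref{eq_1sla}) together with the Floquet decomposition of the $(2k+1)\Z^d$-periodic operator $H_{\omega,k}$ over $\T_k^*$, combined with min-max and the nondegeneracy of Proposition \ref{pro2.2} to count the number of Floquet eigenvalues of $H_{\omega,k,\theta}$ lying below $2E$, yields
\[
\E(N_{\omega,k}(E)) \leq C_k \cdot \mathsf{P}\!\left( \inf_{\theta \in \T_k^*} E_{\omega,k,0}(\theta) \leq 2E \right),
\]
where $C_k$ is polynomial in $k$ and hence negligible in the double-logarithm scale.

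The second step is a Temple-type variational estimate. Using as a trial vector the Floquet ground state of the unperturbed operator $H$ at the nondegenerate minimum $\theta^0 = 0$, suitably periodized on $C_k$ via the Wannier generators of $\mathcal{E}_0$ produced in Lemma \ref{lemMM2}, one finds that the event $\{\inf_\theta E_{\omega,k,0}(\theta) \leq 2E\}$ forces $\langle \psi, V_{\omega,k} \psi\rangle \leq C E$, which by hypothesis (H2) (the minoration $V_i > \mathbf{1}_{\mathcal{C}_i}$) translates into an average control
\[
\frac{1}{(2k+1)^d} \sum_{n \in C_k \cap \Z^d} \omega_i^{(n)} \leq C E, \quad \text{for every } i \in \{1,\ldots,D\}.
\]
A large-deviation bound using independence of the $\omega_i^{(n)}$'s and hypothesis (\ref{eq_cond_va}) then gives
\[
\mathsf{P}\!\left( \inf_\theta E_{\omega,k,0}(\theta) \leq 2E \right) \leq \exp\!\left( -c\, (2k+1)^d\, \bigl|\log \widetilde{\mathsf{P}}_i(\omega_i^{(0)} \leq C' E)\bigr| \right).
\]
Substituting $k \sim E^{-1/2}$, taking $\log|\log(\cdot)|$, and dividing by $\log E < 0$ yields $\limsup \log|\log(N(E) - N(0^+))|/\log E \leq -d/2$ thanks to (\ref{eq_cond_va}), which controls the inner $\log|\log \widetilde{\mathsf{P}}_i|$ as an $o(\log E)$ term.

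The main obstacle is expected to be the Temple step above, specifically in its matrix-valued form. In the scalar case $(D=1)$ Temple's inequality is applied directly around the unique Floquet ground state of $H$ at $\theta = 0$. Here $V_{\omega,k}$ is a diagonal $D\times D$ matrix and a single Floquet ground state of $H_{\theta=0}$ may have coordinates in $\C^D$ of very different sizes; \emph{a priori}, smallness of $\langle \psi, V_{\omega,k} \psi\rangle$ only constrains one channel $\omega_i^{(n)}$ per site. Overcoming this requires working with the full orthonormal family $(v_j)_{j\in M_{\theta^0}}$ of Lemma \ref{lemMM2}, which by construction spans all $D$ channels of the lowest Floquet eigenspace near the minimum, and exploiting (H2) uniformly in $i \in \{1,\ldots,D\}$. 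This is the step where the matrix-valued geometry truly enters and where the argument departs most substantially from the scalar proof in \cite{K99}.
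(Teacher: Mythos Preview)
Your route is genuinely different from the paper's. The paper does not run a Temple/large-deviation argument on the periodic approximations $H_{\omega,k}$; instead it first invokes Theorem \ref{thm_bb22} to pass from $N(E)-N(0^+)$ to $N_{\mathcal{E}_0}(CE)$, the IDS of the projected operator $\Hdisc = \Pi_0 \Hom \Pi_0$, and then, by a partition of unity on $\T^*$ localising near each point of $Z$ together with the quadratic-form comparison of Lemma \ref{6.5}, bounds $N_{\mathcal{E}_0}$ by the IDS of an auxiliary operator $H_\omega^a$ which splits as a direct sum of $D \cdot n_0 \cdot m_0$ copies of \emph{scalar} discrete Anderson models $h_{\omega,i}^{\mathrm{And}} = -\Delta_{\Z^d} + \sum_n \omega_i^{(n)} \pi_n$ on $\ell^2(\Z^d)$. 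The upper bound then follows from the known scalar Lifshitz result of Simon \cite{S85} via (\ref{rafik2}). The point of this detour is precisely to dissolve the matrix structure \emph{before} any probabilistic estimate is made, so that the matrix-valued Temple obstacle you flag never arises.

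Your Temple step, by contrast, has a real gap that you identify but do not close. The claim that $\langle \psi, V_{\omega,k}\psi\rangle \leq CE$ forces $\frac{1}{(2k+1)^d}\sum_n \omega_i^{(n)} \leq CE$ for \emph{every} $i$ would require each component $\psi_i$ of the Floquet ground state of $H$ at $\theta^0$ to have mass bounded below on every cell; nothing in (H1)--(H3) guarantees this (there is no matrix Perron--Frobenius principle here, and $W$ may well force some $\psi_i \equiv 0$). Nor does the family $(v_j)_{j\in M_{\theta^0}}$ of Lemma \ref{lemMM2} ``span all $D$ channels'': that lemma only says it spans the zero-eigenspace of $H_{\theta^0}$, whose dimension $|M_{\theta^0}|$ may be strictly less than $D$. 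Moreover, if that eigenspace is degenerate, the low-energy state of $H_{\omega,k}$ is only close to the ground-state \emph{space}, and which linear combination is selected depends on $\omega$; you would then need a covering argument over the unit sphere of that space which you have not supplied. A secondary point: hypothesis (\ref{eq_cond_va}) is the input for the \emph{lower} bound (Theorem \ref{LL1}), where one needs $\widetilde{\mathsf{P}}_i(\omega_i^{(0)}\leq\epsilon)$ not to decay too fast; for the upper bound the relevant fact is simply that $\supp\nu_i \neq \{0\}$, which already yields an $\exp(-c(2k+1)^d)$ large-deviation bound without any appeal to (\ref{eq_cond_va}).
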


\noindent \textbf{The proof of Theorem \ref{LL12}.} To prove the upper bound, it is enough to prove the same upper bound on $N_{\mathcal{E}_0}$ (as defined in Theorem \ref{thm_bb22}). To do this, we show that $N_{{\mathcal{E}}_{0}}$ (and so $N$) may be compared to the IDS of some well chosen discrete Anderson model whose behavior of its IDS is already known.

\noindent We begin by isolating the contributions from the various points for which $E_{j}(\theta)$ take the value $0$. We recall that the band at $0$ is generated by $(E_j(\theta))_{1\leq j\leq n_{0}}$. For $1\leq j \leq n_{0} $, $Z_j = \{\theta \in \T^* ; E_j(\theta)=0 \}$. The sequence $(Z_j)_{1\leq j \leq n_{0}}$ is decreasing ($Z_{j+1} \subset Z_j$). Let $\theta^0 \in Z$. We set $j(\theta ^0)=\sup M_{\theta ^0}$ with $M_{\theta^0}=\{j\ ;\ 1\leq j\leq n_{0}, E_{j}(\theta^0)=0\}$. We replace the Floquet eigenvectors $(w_j(\cdot, \theta))_{1 \leq j \leq j(\theta^0)}$ associated to $(E_j(\theta))_{1\leq j\leq j(\theta^0)}$ by the vectors $(v_j(\cdot, \theta))_{1 \leq j \leq j(\theta^0)}$ constructed in Lemma \ref{lemMM2}. They are analytic in a neighborhood $V_{\theta^0}$ of $\theta^0$. Let $\theta$ be close to $\theta^0$. The operator $H^0(\theta)=\Pi_{0} H(\theta) \Pi_{0}$ is unitarily equivalent to the multiplication operator by a function on $L^2(\T^*)$ with values in $\mathcal{M}_{n_0}(\C)$. This matrix-valued function takes the following block diagonal form :
$$\left(\begin{matrix}
B_{j(\theta ^0)}(\theta)& 0 & 0 & ... & 0 \\
0 & E_{j(\theta ^0)+1}(\theta) & 0 & ... & 0 \\
0 & 0 & \ddots & \ddots & : \\
0 & 0 & ... & 0 & E_{n_{0}}(\theta)
\end{matrix}\right),$$
\vskip 2mm

\noindent where the matrix $B_{j(\theta ^0)}(\theta)$ is of size $j(\theta ^0)\times j(\theta ^0)$ and is given by
$$\left(\begin{matrix}
\langle v_{1}(\cdot,\theta),H(\theta
)v_{1}(\cdot,\theta)\rangle_{L^2(C_{0})\otimes\C^{\DD}} & ...& \langle v_{1}(\cdot,\theta),H(\theta )v_{j(\theta^0)}(\cdot,\theta)\rangle_{L^2(C_{0})\otimes \C^{\DD}} \\
: & \ddots & : \\
\langle v_{j(\theta ^0)}(\cdot,\theta),H(\theta)v_{1}(\cdot,\theta)\rangle_{L^2(C_{0})\otimes \C^{\DD}} & ... & \langle v_{j(\theta ^0)}(\cdot,\theta),H(\theta)v_{j(\theta^0)}(\cdot,\theta)\rangle_{L^2(C_{0})\otimes \C^{\DD}}
\end{matrix}\right) .$$
\vskip 2mm

\noindent The matrix $B_{j(\theta ^0)}(\theta)$ has $(E_j(\theta))_{1\leq j\leq j(\theta^0)}$ for eigenvalues. The operator  $V_\omega^0=\Pi_0 V_\omega \Pi_0$ is unitarily equivalent to the multiplication operator by the matrix with entries $(\langle V_\omega v_i,v_j\rangle )_{1\leq i,j\leq n_0}$.

\noindent For $u\in L^2(\T^*)\otimes \C^{\DD},\ \Pi_0 u=\sum_{i=1}^{n_0}\langle u, v_i\rangle_{L^2(C_0)\otimes \C^{\DD}}v_i.$ For $\theta ^0 \in Z$, we set
$$\varpi_{\theta ^0}(\theta)= \sum_{j=1}^d \Big(1-\cos(\theta_j -\theta ^0  _j)\Big).$$
We recall that the eigenvalues $(E_j(\theta))_{1\leq j \leq j(\theta^0)}$ are non-degenerate at $0$.  So there exists $\widetilde{V}_{\theta ^0} $ (an open neighborhood of $\theta ^0$) and $C>1$ such that, for
$\theta \in \widetilde{V}_{\theta ^0}$, we have, for $1\leq j \leq j(\theta ^0),\ CE_j(\theta )\geq \varpi_{\theta ^0} (\theta)$ and, for $j\geq j(\theta ^0),\ CE_j(\theta) \geq 2d$. We remark that the neighborhood $\widetilde{V}_{\theta ^0}$ can be chosen such that $V_{\theta ^0}\subset \widetilde{V}_{\theta^0}$, where $V_{\theta^0}$ was defined in Lemma \ref{lemMM2}.

\noindent Let $H^b_{\theta^0}(\theta)$ be the $n_{0} \times n_{0} $ diagonal matrix with identical diagonal entries equal to $\varpi_{\theta^0}$. For $\theta \in \widetilde{V}_{\theta ^0}$, we have
\begin{equation}
H ^b _{\theta ^0}(\theta)\leq C\cdot H^0 (\theta). \label{24}
\end{equation}
Finally, we note that $(\widetilde{V}_{\theta ^0})_{\theta ^0 \in Z}$ can be chosen so that they cover $\T^*$, (i.e. $\displaystyle \cup _{\theta ^0 \in Z} \widetilde{V}_{\theta^0}=\T^*$) and such that each one of them contains only one point of $Z$ (i.e. for $\theta\in Z, \theta^{\prime}\in Z $ such that $\theta \neq \theta ^{\prime}$, we have $\theta^{\prime} \notin \overline{\widetilde{V}_{\theta}}$). We order the points in $Z=\{\theta^k; 1\leq k\leq m_0 \}$, where $m_0 = \# Z$. Let $(\chi_k)_{1\leq k\leq m_0}$ be functions in $C^{\infty}(\T^*)$ which form a partition of the unity on $\T^*$ such that, for $1\leq k \leq m_0$, $\supp (\chi_k) \subset \widetilde{V}_{\theta^k}$, $0\leq \chi_k \leq 1 $ and $\chi_k \equiv 1$ in a neighborhood of $\theta^k$.

\noindent So there exists $C>1$ such that, for any $\theta \in \T^*$, we have,
\begin{equation}\label{vv}
\frac{1}{m_0} \leq \sum _{k=1}^{m_0} \chi_k^2 \leq 1\quad \mbox{ and }\quad \sum_{k=1}^{m_0} H^b_{\theta^k}(\theta)\chi_k ^2 \leq C H^0(\theta).
\end{equation}
For $t\in (L^2(\T^*)\otimes \C^{\DD})\otimes \C^{n_{0}}\otimes \C^{m_0},$ we set $t=(t_{j,k})_{1\leq j\leq n_{0};1\leq k\leq m_0}$. We consider $t$ as a system of $m_0$ columns denoted by $(t_{.,k})_{1\leq k\leq m_0}$. Each column belongs to $(L^2(\T^*)\otimes \C^{\DD})\otimes \C^{n_{0}}$. We endow $(L^2(\T^*)\otimes \C^{\DD})\otimes \C^{n_{0}}\otimes \C^{m_0}$ with the scalar product generating the following Euclidean norm:
$$\Big\| t\Big\|_{(L^2(\T^*)\otimes \C^{\DD})\otimes \C^{n_{0}}\otimes \C^{m_0}}^2=\sum_{k=1}^{m_0} \Big\| t_{\cdot,k}\Big\| _{(L^2(\T^*)\otimes \C^{\DD})\otimes \C^{n_{0}}}^2=\sum_{1\leq j\leq n_{0}
,1\leq k\leq m_0}\Big\| t_{j,k}\Big\| _{L^2(\T^*)\otimes \C^{\DD}}^2.$$
We define the mapping $S:(L^2(\T^*)\otimes \C^{\DD})\otimes \C^{n_{0}}\longrightarrow (L^2(\T^*)\otimes \C^{\DD})\otimes \C^{n_{0}}\otimes \C^{m_0} $ by
$$S(t)=(\chi_k t)_{1\leq k\leq m_0}=(\chi_k t_j)_{1\leq j\leq n_{0},\ 1\leq k\leq m_0},\quad \mbox{ if }\ t=(t_j)_{1\leq j\leq n_{0}}\in (L^2(\T^*)\otimes \C^{\DD})\otimes \C^{n_{0}}.$$
Here, for any $1\leq j \leq n_0$, $t_j=(t_{ij})_{1\leq i\leq D}\in L^{2}(\T^*)\otimes \C^{\DD}$.

\noindent The adjoint of $S$, $S^{*}:(L^2(\T^*)\otimes \C^{\DD})\otimes \C^{n_{0}}\otimes \C^{m_0}\longrightarrow (L^2(\T^*)\otimes \C^{\DD})\otimes \C^{n_{0}}$ is defined by
$$S^{*}(t)=\Bigg( \sum_{1\leq k\leq m_0}\chi_k t_{j,k}\Bigg) _{1\leq j\leq n_{0}}\quad \mbox{for}\ t=(t_{j,k})_{1\leq j\leq n_{0}; 1\leq k\leq m_0} \in (L^2(\T^*)\otimes \C^{\DD})\otimes \C^{n_{0}}\otimes \C^{m_0}.$$
Here, for any $1\leq j\leq n_0$ and any $1 \leq k \leq m_0$, we have $t_{j,k}=(t_{i,j,k})_{1\leq i\leq D}$. According to equation (\ref{vv}) we have $\frac{1}{m_0} I\leq S^{*}\circ S\leq I$, (here $I$ is the identity in $(L^2(\T^*)\otimes \C^{\DD})\otimes \C^{n_{0}}$), thus $S$ is one to one.  Using the boundedness assumption on the $V_i$ and on the support of the $\omega_i^{(n)}$, one shows the following lemma:

\begin{lem}\label{6.5}
There exists $C>0$ such that, for $t\in (L^2(\T^*)\otimes \C^{\DD})\otimes \C^{n_{0}}$, we have
$$\langle H_\omega^a S(t),S(t) \rangle_{(L^2(\T^*)\otimes \C^{\DD})\otimes \C^{n_{0}}\otimes \C^{m_0}}\leq C\langle H_\omega^{0}t,t\rangle_{(L^2(\T^*)\otimes \C^{\DD})\otimes \C^{n_{0}}},$$
where the operator $H_{\omega}^a$ acting on $(L^2(\T^*)\otimes \C^{\DD})\otimes \C^{n_{0}}\otimes \C^{m_0}$ is defined by
$$H_\omega^a t=\Big( H_k^a t_{i,j,k}+ V_{\omega, i}^a t_{i,j,k}\Big)_{1\leq i \leq D,\ 1\leq j\leq n_{0},\ 1\leq k\leq m_0}.$$
Here, $H_k^a$ is the multiplication by $\varpi_{\theta^k}$ acting as
a multiplication operator on $L^2(\T^*)$, $V_{\omega,i}^a=\sum_{n
\in \Z^d} \omega_i^{(n)}\Pi_{n}$, where $\Pi_{n}$ is the orthogonal
projection on the vector $\theta\mapsto e^{\ii n \theta}$ in
$L^2(\T^*)$, and $H_{\omega}^0$ is defined in Section
\ref{sec_disc_pb}. For $A=(a_{i,j})_{1\leq i,j\leq n_0},$ and $t\in
(L^2(\T^*)\otimes \C^{\DD})\otimes \C^{n_{0}}, At\in
(L^2(\T^*)\otimes \C^{\DD})\otimes \C^{n_{0}}$, with
$(At)_{j}=\sum_{i=1}^{n_0}a_{j,i}t_i$ \end{lem}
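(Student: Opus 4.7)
The plan is to split the inequality into its deterministic (kinetic) and random (potential) contributions: on the right-hand side, $H_\omega^0 = H^0 + V_\omega^0$ with $V_\omega^0 = \Pi_0 V_\omega \Pi_0$; on the left, $H_\omega^a t$ splits analogously into the terms involving $H_k^a$ and those involving $V_{\omega,i}^a$. Each piece is then handled by a different tool: the kinetic bound follows directly from the matrix inequality (\ref{vv}) and the non-degeneracy statement of Proposition \ref{pro2.2}, while the potential bound rests on a Wannier-basis computation exploiting hypothesis $(H2)$.

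For the kinetic part, since $H_k^a$ is multiplication by $\varpi_{\theta^k}(\theta)$ on $L^2(\T^*)$ and is independent of the indices $i,j$, I compute
$$\sum_{i,j,k}\langle H_k^a(\chi_k t_{i,j}),\chi_k t_{i,j}\rangle_{L^2(\T^*)} = \sum_{i,j}\int_{\T^*}\Bigl(\sum_{k=1}^{m_0}\chi_k^2(\theta)\,\varpi_{\theta^k}(\theta)\Bigr)|t_{i,j}(\theta)|^2\,\dd\theta.$$
Since $H^b_{\theta^k}$ is the scalar multiple $\varpi_{\theta^k}\cdot I_{n_0}$, inequality (\ref{vv}) is equivalent to the pointwise quadratic-form bound $\sum_k\chi_k^2(\theta)\,\varpi_{\theta^k}(\theta)\,|v|^2 \leq C\,\langle H^0(\theta)v,v\rangle$ for every $v\in\C^{n_0}$. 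Applying this to $(t_{i,j}(\theta))_{1\leq j\leq n_0}$ for each fixed $(i,\theta)$ and integrating then summing in $i$ yields the desired bound $\leq C\langle H^0 t,t\rangle$.

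For the potential part, using $V_i\geq\mathbf{1}_{\mathcal{C}_i}$ from $(H2)$ and $\omega_i^{(n)}\geq 0$ from $(H3)$, I bound
$$V_\omega^0 \geq \sum_{i=1}^{\DD}\sum_{n\in\Z^d}\omega_i^{(n)}\,\Pi_0\bigl(\mathbf{1}_{\mathcal{C}_i}(\cdot-n)\,E_{ii}\bigr)\Pi_0,$$
where $E_{ii}$ denotes the rank-one projector onto the $i$-th coordinate of $\C^{\DD}$. Conjugating by the Floquet transform $U$ and expressing this in the orthonormal family $(v_j(\cdot,\theta))_{1\leq j\leq n_0}$ of Lemma \ref{lemMM2}, its diagonal (in $j$) contribution is a positive multiple of $\sum_n\omega_i^{(n)}\Pi_n$ on each $L^2(\T^*)$-component, the positive multiplier being $\int_{\mathcal{C}_i}|v_j(x,\theta)|^2\,\dd x$, which is bounded below uniformly in $\theta$ by continuity and compactness of $\T^*$. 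This is exactly a constant multiple of $V_{\omega,i}^a$ on each component, and since $\sum_k\chi_k^2\leq 1$ only decreases $L^2$-norms, the required upper bound on the left-hand-side potential contribution follows.

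The main obstacle is dealing with the off-diagonal (cross-band and cross-translate) terms in the Wannier expansion of $\Pi_0 V_\omega\Pi_0$: the cross-terms $\langle\mathbf{1}_{\mathcal{C}_i}(\cdot-n)v_{j'},v_j\rangle$ for $j\neq j'$, or involving distinct shifted Wannier generators, do not vanish. However, because the $v_j(\cdot,\theta)$ depend real analytically on $\theta$ on the neighborhoods $V_{\theta^0}$ provided by Lemma \ref{lemMM2}, the associated Wannier generators decay rapidly in the translation variable, so only a uniformly bounded number of cross-terms are significant. A Schur-type estimate analogous to those carried out in \cite{K99,N03} absorbs them into the universal constant $C>0$, completing the proof.
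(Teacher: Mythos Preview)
The paper itself does not prove this lemma in detail but simply refers to Lemma~5.5 of \cite{N03}; your sketch follows the same kinetic/potential decomposition that underlies that reference, and your treatment of the kinetic piece via (\ref{vv}) is correct because $H_k^a$ is multiplication by $\varpi_{\theta^k}(\theta)$ and therefore commutes with multiplication by $\chi_k(\theta)$.

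The potential piece, however, has a genuine gap. The operator $V_{\omega,i}^a=\sum_n\omega_i^{(n)}\Pi_n$ is diagonal in the Fourier basis $(\ee^{\ii n\theta})_n$, \emph{not} in the variable $\theta$, so it does not commute with multiplication by $\chi_k(\theta)$. Hence the sentence ``since $\sum_k\chi_k^2\leq1$ only decreases $L^2$-norms'' does not deliver the inequality you need: one has
\[
\langle V_{\omega,i}^a\,\chi_k f,\chi_k f\rangle=\sum_{n}\omega_i^{(n)}\bigl|\widehat{\chi_k f}(n)\bigr|^2,
\qquad \widehat{\chi_k f}(n)=\sum_m\widehat{\chi_k}(n-m)\,\hat f(m),
\]
and the convolution mixes Fourier modes carrying possibly very different random weights $\omega_i^{(\cdot)}$, so there is no direct pointwise comparison with $\sum_n\omega_i^{(n)}|\hat f(n)|^2$. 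The argument in \cite{K99,N03} controls precisely these localization errors via the rapid decay of the Fourier coefficients of the smooth cutoffs $\chi_k$ and absorbs the remainder into the kinetic form, which is uniformly bounded below away from the $\theta^k$. In other words, the Schur-type estimate you invoke only at the very end for the off-diagonal Wannier cross-terms must already be deployed when passing through the partition $(\chi_k)_k$, not only for the $j\neq j'$ contributions. A smaller point in the same paragraph: the uniform lower bound on $\int_{\mathcal{C}_i}|(v_j)_i(x,\theta)|^2\,\dd x$ requires that the $i$-th component of $v_j(\cdot,\theta)$ does not vanish identically on $\mathcal{C}_i$; in the matrix-valued setting this is a unique-continuation statement for the system and deserves at least a reference.
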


\noindent The proof of this lemma follow the same steps as Lemma 5.5 in \cite{N03}. We use it to end the proof of Theorem \ref{LL12}. Let us first notice that the operator $H_{\omega}^a$ could be written as a direct sum of $n_0$ copies of $m_0\times \DD$ random scalar-valued continuous Anderson models. Indeed, we can write
$$(L^2(\T^*)\otimes \C^{\DD})\otimes \C^{n_0}\otimes \C^{m_0}=\bigoplus_{1\leq i\leq D,\ 1\leq j\leq n_0,\ 1\leq
k\leq m_0} L^{2}(\T^*)\otimes \tilde{\C^{i}}\otimes \tilde{\C^{j}}\otimes {\tilde{\C^{k}}}.$$
Here, for $1\leq j \leq l$, we use the notation $\tilde{\C^{j}}=\{0\}^{j-1}\times \C\times \{0\}^{l-{j}}$. So $H_\omega^a$ is unitarily equivalent to
$$\bigoplus_{1\leq i \leq D,\ 1\leq j\leq n_0,\ 1\leq k\leq m_0}\ H_{\omega,i,k}^{\mathrm{And}},$$
Here $H_{\omega,i,k}^{\mathrm{And}}$ acts on  $L^{2}(\T^*)\otimes \tilde{\C^{i}}\otimes \tilde{\C^{j}}\otimes{\tilde{\C^{k}}}$. Using the discrete Fourier transformation, we get that for every $k\in \{ 1,\ldots, m_0 \}$,  $H_{\omega,i,k}^{\mathrm{And}}$ is unitarily equivalent to $ h_{\omega,i}^{\mathrm{And}}$, where $h_{\omega,i}^{\mathrm{And}}$ acts on $\ell^2(\Z^d)$ and is defined by
\begin{equation}\label{rafik}
h_{\omega,i}^{\mathrm{And}} = -\Delta_{\Z^d}+\sum_{n \in \Z^d} \omega_i^{(n)}\pi_n.
\end{equation}
Here, if $\delta_n$ is the vector $(\delta_{m}^n)_{\beta \in \Z^d}$ where $\delta_{m}^n$ is the Kronecker's symbol, then $\pi_n$ is the orthogonal projection on $\delta_n$  and $-\Delta_{\Z^d}$ is the discrete Laplacian defined by :
\begin{equation}
\forall u\in l^2(\Z^d),\ (\Delta_{\Z^d} u)_n=\frac{1}{2}\sum_{|m-n|=1}(u_n - u_m).
\end{equation}
Using the fact that for operators $A$ and $B$, we have $N(A\oplus B,E)=N(A,E)+N(B,E)$ (see \cite{PF}), we get that
\begin{equation}\label{rafik2}
N_{\mathcal{E}_0}(\varepsilon)\leq n_0\times m_0 \times \sum_{i=1}^{\DD} N( h_{\omega,i}^{\mathrm{And}},C. m_0 .\varepsilon).
\end{equation}

\noindent To satisfy assumptions of \cite{S85}, we set, for every $i\in \{1,\ldots, \DD\}$, $s_i=\sup_{n\in \Z^d} \omega_i^{(n)}$ and :

$$\forall i\in \{ 1,\ldots,\DD\},\ \tilde{\omega}_i^{(n)}= \left\lbrace \begin{array}{ccc}
                                        0 & \mbox{ if } & \omega_i^{(n)}\in [0,s_i /2] \\
                    s_i /2 & \mbox{ if } &  \omega_i^{(n)}\in (s_i /2,s_i ]
                                       \end{array}\right.$$

\vskip 2mm

\noindent By changing $\omega_i^{(n)}$ into $\tilde{\omega}_i^{(n)}$ in (\ref{rafik}), we define a new operator which we denote by $\tilde{h}_{\omega,i}^{\mathrm{And}}$. We notice that $\tilde{h}_{\omega,i}^{\mathrm{And}}$ lower bound $h_{\omega,i}^{\mathrm{And}}$ with the same bottom of the spectrum. As it is known that each $\tilde{h}_{\omega,i}^{\mathrm{And}}$ exhibits Lifshitz tails with Lifshitz exponent $-d/2$ (see \cite{K89,S85}), using Theorem \ref{thm_bb22} and (\ref{rafik2}), we get that
$$\limsup_{\varepsilon \longrightarrow 0^{+}}\frac{\log | \log (N(\varepsilon )-N(0^+))|}{\log \varepsilon} \leq -\frac{d}{2}.$$
This ends the proof of Theorem \ref{LL12}.$\Box$

\vskip 5mm

\acknowledgements Both authors would like to thank Fr\'ed\'eric
Klopp for fruitful discussions. H. Boumaza would also like to thank
H.N, for its hospitality on two occasions.


\end{document}